\newtheorem{theorem}{Theorem}
\newtheorem{corollary}{Corollary}
\newtheorem{example}{Example}
\newtheorem{remark}{Remark}
\newtheorem{definition}{Definition}
\title{Sequential changepoint detection in classification data
under label shift}
\author[1]{Ciaran Evans}
\author[2]{Max G'Sell}
\affil[1]{Department of Mathematics and Statistics, Wake Forest University}
\affil[2]{Department of Statistics and Data Science, Carnegie Mellon University}
\begin{document}

\maketitle

\begin{abstract}
Classifier predictions often rely on the assumption that new
  observations come from the same distribution as training data. When the
  underlying distribution changes, so does the optimal classification rule, and
  performance may degrade. We consider the problem of detecting such
  a change in distribution in sequentially-observed, unlabeled classification
  data.
  We focus on label shift changes to the distribution, where the class priors
  shift but the class conditional distributions remain unchanged.  
  We reduce this problem to
  the problem of detecting a change in the one-dimensional classifier scores,
  leading to simple nonparametric 
  sequential changepoint detection procedures. Our procedures leverage
  classifier training data to estimate the detection statistic, and converge to
  their parametric counterparts in the size of the training data. In
  simulations, we show that our method outperforms other detection
  procedures in this label shift setting.
\end{abstract}

\section{Introduction}

We consider the problem of rapid, online detection of a change in the
distribution of classification data, without access to the true classification
labels of those data points.  This problem is of importance both because such a
change impacts the performance of classification algorithms, and because
it often reflects an interesting change in the underlying generating process.  The
non-sequential problem of classification under a changed generating distribution has been
extensively studied, as has the general problem of sequential
changepoint detection; see Section
\ref{sec:background}.  However, the intersection of these topics is
relatively unexplored, and yields interesting structure and 
methodological improvements.  In this paper, we restrict ourselves to the
particular case of a \emph{label shift} change in the
distribution \citep{lipton2018detecting}, where the distribution of
classification labels changes without changing the conditional distribution of
the covariates; we illustrate this setting in the following motivating example.

\subsection{Motivation: dengue outbreaks} 

Dengue, a viral infection transmitted by
mosquitoes, is found in tropical and sub-tropical regions around the world, and
affects up to 400 million people a year \citep{world2020dengue}. Early
treatment of dengue is important for improving prognosis, and so it is key to
correctly diagnose patients with the disease. However, dengue cases are
commonly mis-diagnosed \citep{world2020dengue}; while gold-standard diagnostic
tests and rapid antigen tests exist, these may not always be available to
healthcare providers. To assist healthcare workers in diagnosis and early
detection of dengue, \cite{tuan2015sensitivity} developed a classifier based on
simple diagnostic and laboratory measurements, such as temperature, vomiting,
and white blood cell count. The authors recommend deploying the classifier to
help diagnose dengue in patients, which entails sequentially applying the
classifier to make a prediction for each new patient.

However, the prevalence of dengue in a community may change quickly, due to
both seasonal trends and outbreaks \citep{wiwanitkit2006observation,
garg2011prevalence, hsu2017trend}. When a sudden change in dengue prevalence occurs,
it is vital to raise an alarm, for two reasons: because a change in community prevalence shifts the posterior
probabilities underlying the classifier and requires we update our classifier
predictions, and also as a matter of public health. As noted by \cite{hsu2017trend}, ``strategies are needed to
respond quickly to unexpected incidents.''

Consider the case of a dengue outbreak: the
proportion of patients with dengue will increase, but we expect the
symptoms used by \cite{tuan2015sensitivity} for classification to remain the same.  Denote by $X
\in \mathbb{R}^d$ the set of covariates for classification (such as body
temperature, vomiting, white blood cell count, etc.), and by $Y \in \{0,1\}$
a patient's true disease status.  An outbreak implies that $P(Y = 1)$ changes, but the
distributions of $X|Y=0$ and $X|Y=1$ do not.  This example---introduced by
\cite{lipton2018detecting}---constitutes a \emph{label
shift} in the distribution. 

\subsection{Contributions}  

In this paper, we focus on nonparametric methods for detecting label shift in
sequential classification data.  The major contributions are as follows:
\begin{enumerate}
  \item We construct a
nonparametric procedure for detecting label shift changepoints in the data
distribution, and prove that it is asymptotically optimal under assumptions on the
performance of the underlying classifier.  We also demonstrate in simulations that the
procedure performs well under mild violations of both
the classifier assumptions and the label shift assumpton.

  \item We provide new, more general theoretical results about the
    performance of any nonparametric changepoint detection procedures that are
    based on likelihood ratio estimates.  Our results guarantee asymptotic
    optimality for changepoint detection when the likelihood ratio estimate
    converges in total variation distance to the true likelihood ratio.  These
    results are applicable beyond the label shift setting considered in our paper.  

  \item We demonstrate significantly improved performance of the proposed
    procedures over the current state-of-the-art in both simulation and real
    data.  For the latter, we apply our
    procedure to detect changes in dengue prevalence using real data from
    \cite{tuan2015sensitivity}.

\end{enumerate}

\noindent In Section \ref{sec:problem-method-background}, we formally
develop the problem and provide relevant background on sequential detection and
label shift. 
In Section \ref{sec:proposed-method}, we describe a simple nonparametric detection
statistic based on underlying classifier scores. Intuitively, performance of the
proposed procedure will depend on performance of the classifier.  In Section
\ref{sec:theory}, we make this relationship clear by developing new theoretical
results on the performance of a broader
class of nonparametric detection procedures. We demonstrate the efficiency of
the proposed procedure in both simulation and in an application to real dengue
data in Sections
\ref{sec:simulations} and \ref{sec:application}, in comparison to other nonparametric detection procedures.

\section{Problem and method}
\label{sec:problem-method-background}

\begin{figure}[ht]
\includegraphics[scale=0.22]{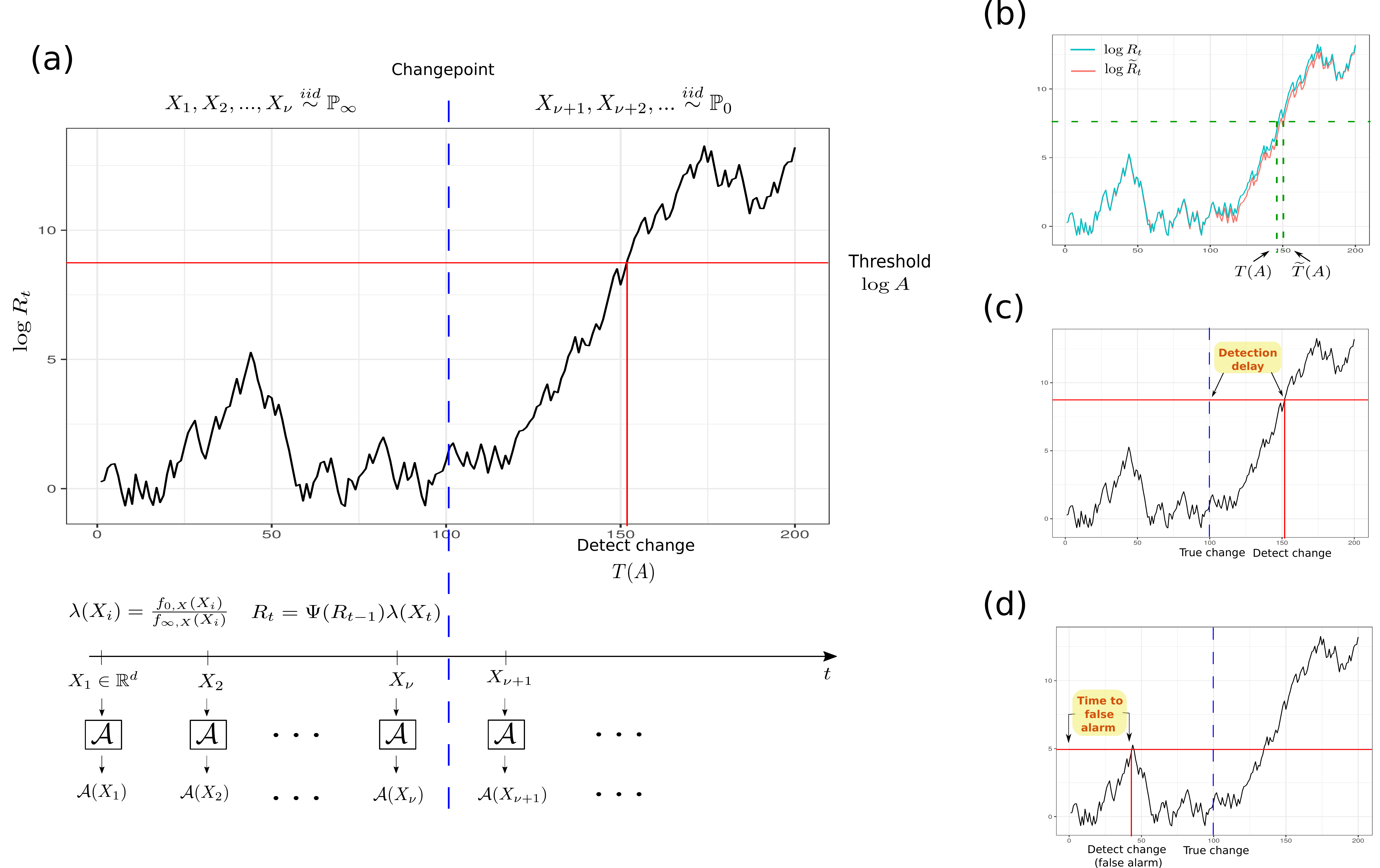}
\caption{Overview of sequential changepoint detection in the classifier setting. \textbf{(a)} Data $X_1, X_2,...$ is observed from the pre-change distribution $\mathbb{P}_\infty$ and the post-change distribution $\mathbb{P}_0$. At each time $t$, a prediction $\mathcal{A}(X_t)$ is made. If $f_{0,X}$ and $f_{\infty, X}$ are known, then a detection statistic $R_t$ can be calculated using the likelihood ratio $\lambda(X_t)$. A change is detected when $R_t \geq A$ (or equivalently $\log R_t \geq \log A$). \textbf{(b)} When the true likelihood ratio $\lambda$ is unknown, we can use an estimate $\widehat{\lambda}$ instead; $\widetilde{R}_t = \Psi(\widetilde{R}_{t-1}) \widehat{\lambda}(X_t)$ is the resulting detection statistic. When $\widehat{\lambda}$ is close to $\lambda$, the stopping times $\widetilde{T}(A)$ and $T(A)$ are also expected to be close. \textbf{(c)} When a change is detected after the true changepoint $\nu$, then $T(A) - \nu$ is the detection delay. \textbf{(d)} When $T(A) < \nu$, then we have a false alarm, and $T(A)$ is the time to false alarm.}
\label{fig:cp-overview}
\end{figure}

\subsection{Problem statement and notation}

We consider a sequential classification setting with unobserved labels, where
feature vectors $X_1, X_2, X_3,... \in \mathbb{R}^d$ arrive sequentially, but
the associated labels $Y_i \in \{0,1\}$ are \emph{unobserved}. In our dengue
example, $X_i$ represents diagnostic measurements like temperature and white
blood cell count, while $Y_i$ represents true dengue status.  We assume that a
classifier, $\mathcal{A}(\cdot)$, has been trained on a separate set of
training observations $(X_1', Y_1'),...,(X_m', Y_m')$ and is used to predict
the unobserved labels $Y_i$.  

At some time $\nu \geq 0$ in this sequence, called the \textit{changepoint}, the distribution of $(X_i, Y_i)$
changes.  We notate the pre-change distribution as $\mathbb{P}_{\infty}$ and
the post-change distribution as $\mathbb{P}_0$, such that $(X_1', Y_1')$,...,$(X_m', Y_m')$, $(X_1, Y_1)$,...,$(X_\nu, Y_\nu) \overset{iid}{\sim} \mathbb{P}_\infty$ and $(X_{\nu + 1}, Y_{\nu + 1}), (X_{\nu + 2}, Y_{\nu + 2}),... 
\overset{iid}{\sim} \mathbb{P}_0$.
Our aim is to detect the change in the distribution of $(X_i,
Y_i)$ as quickly as possible, using the observed sequence $X_i$.

\begin{remark}
Throughout the paper we use the subscripts $\infty$ and 0 for pre- and post-change
quantities respectively, to be consistent with the sequential changepoint
detection literature. The motivation is that $\nu = \infty$ indicates the
change never occurs, so data is from the pre-change distribution, while $\nu =
0$ indicates the change occurs before we observe any data, so data is from the
post-change distribution.
When context is clear we will let
$\mathbb{P}_\infty$ and $\mathbb{P}_0$ denote general pre- and post-change
distributions, so for example $(X_i, Y_i) \sim \mathbb{P}_\infty$ and $X_i \sim
\mathbb{P}_\infty$ both indicate data drawn before a change occurs.
\end{remark}

The general problem of classification under a changed distribution
has been studied
extensively in the literature. Because arbitrary changes to high-dimensional classification data may be impossible to
correct or detect, it is standard to make additional assumptions on the nature of the
change. Because it frequently arises in practice, we will focus on the \textbf{label shift} setting
\citep{saerens2002adjusting, storkey2009training}, which has received recent
attention in the machine learning literature
\citep{ackerman2020sequential, azizzadenesheli2019regularized, lipton2018detecting,
rabanser2019failing}. Label shift assumes that the marginal distribution of
$Y_i$ changes, but the conditional distribution of $X_i|Y_i$ does not:

\begin{definition}[Label shift]
Let $f_{\infty, X, Y}$, $f_{\infty, Y}$, and $f_{\infty, X|Y=y}$ denote the densities/mass functions of $(X, Y)$, $Y$, and $X|Y=y$ respectively, under $\mathbb{P}_\infty$. Similarly define $f_{0, X, Y}$, $f_{0, Y}$, and $f_{0, X|Y=y}$. The label shift assumption is that $f_{0, X|Y=y} \equiv f_{\infty, X|Y=y}$ for all $y$, so
\begin{align}
\label{eq:label-shift}
f_{0, X, Y}(x,y) = f_{0, Y}(y)f_{0, X|Y=y}(x) = f_{0, Y}(y)f_{\infty, X|Y=y}(x) \hspace{1cm} \forall x,y.
\end{align}
\end{definition}

Label shift is simply a change in the mixing proportion for the class
distributions $X|Y=0$ and $X|Y=1$. Since the conditional distribution of
$X|Y=y$ remains the same, the conditional distribution of the classifier
predictions, $\mathcal{A}(X)|Y=y$, does as well. As we show in the following sections, the classifier predictions $\mathcal{A}(X_i)$ can be further leveraged to improve changepoint detection in the label shift setting.

\subsection{Proposed method}
\label{sec:proposed-method}

Let $(X_1', Y_1'),...,(X_m', Y_m') \overset{iid}{\sim} \mathbb{P}_\infty$ denote our labeled training set, used to train the classifier $\mathcal{A}(\cdot)$. Classifiers typically predict either the probability of a positive case $\mathbb{P}(Y_i = 1 | X_i)$ or the label $Y_i$, and so we assume $\mathcal{A}(X_i) \in [0,1]$ is a predicted probability, or $\mathcal{A}(X_i) \in \{0,1\}$ is a predicted label. 

To detect a change in the unlabeled sequence $X_1, X_2, ...$, many changepoint detection procedures use a recursive detection statistic $R_t^x = \Psi(R_{t-1})\lambda(X_t)$, where $\lambda(X_t) = f_{0,X}(X_t)/f_{\infty, X}(X_t)$ is the likelihood ratio at time $t$, $\Psi$ is an update function, and the initial value is $R_0^x = x$ (Figure \ref{fig:cp-overview}(a)). For example, the classical CUSUM procedure has $\Psi(r) = \max\{1, r\}$ and $x = 1$, while the Shiryaev-Roberts procedure has $\Psi(r) = 1 + r$ and $x = 0$ \citep{polunchenko2012state}. A change is detected when $R_t^x$ crosses a pre-specified threshold $A > R_0^x$, with stopping time $T^x(A) = \inf \{t \geq 1: R_t^x \geq A\}$.

In most applications, $\lambda = f_{0,X}/f_{\infty,X}$ is unknown. However, under the label shift assumption, we can rewrite the likelihood ratio as
\begin{align}
\frac{f_{0, X}(X_i)}{f_{\infty, X}(X_i)} = \left( \frac{\pi_0}{\pi_\infty} - \frac{1 - \pi_0}{1 - \pi_\infty} \right) \mathbb{P}_\infty(Y_i = 1 | X_i) + \frac{1 - \pi_0}{1 - \pi_\infty}, \label{eq:scoreratio}
\end{align}
with pre- and post-change proportions $\pi_\infty = \mathbb{P}_\infty(Y = 1)$ and $\pi_0 = \mathbb{P}_0(Y = 1)$. Noting that $\mathbb{P}_\infty(Y = 1 | X = x)$ is a typical estimand in classification, we define the estimated likelihood ratio $\widehat{\lambda}_{\mathcal{A},m}$ by
\begin{align}
\label{eq:est-lr}
\widehat{\lambda}_{\mathcal{A}, m}(x) = \left( \frac{\pi_0}{\pi_\infty} - \frac{1 - \pi_0}{1 - \pi_\infty} \right) \mathcal{A}(x) + \frac{1 - \pi_0}{1 - \pi_\infty},
\end{align}
where the subscripts $\mathcal{A}$ and $m$ denote dependence on the classifier and the size of the training set. Our method detects label shift with the nonparametric detection statistic $\widetilde{R}_t^x$ and resulting stopping time $\widetilde{T}^x(A)$, where $\widetilde{R}_0^x = x$ and $\widetilde{R}_t^x = \Psi(\widetilde{R}_{t-1}^x)\widehat{\lambda}_{\mathcal{A}, m}(X_t)$, and $\widetilde{T}^x(A) = \inf \{t \geq 1 : \widetilde{R}_t^x \geq A\}$ (Figure \ref{fig:cp-overview}(b)). When the pre- and post-change proportions $\pi_\infty$ and $\pi_0$ are known, performance of our label shift detection method depends on performance of the classifier $\mathcal{A}$, which we formalize in Section \ref{sec:theory} and illustrate with simulations in Section \ref{sec:simulations}. When $\pi_0$ is unknown, we use a mixing procedure that integrates \eqref{eq:est-lr} over possible values of $\pi_0$ (as we have access to labeled pre-change training data, the assumption that $\pi_\infty$ is known is reasonable). We discuss this mixing procedure in the following section.

\begin{remark}
Using classifier predictions to estimate the likelihood ratio is natural in the label shift setting, as a classifier is already constructed and being applied to make predictions for new data. However, an advantage of the label shift setting is that it supports a variety of other approaches to likelihood ratio estimation. For example, approaches like kernel mean matching \citep{gretton2009covariate} and uLSIF \citep{kanamori2009least} rely on both pre- and post-change data; under the label shift assumption, a post-change sample can be generated by re-sampling or re-weighting the training data $(X_1', Y_1'),...,(X_m', Y_m')$ when $\pi_0$ is known. We compare this approach to our classifier-based likelihood ratio estimate in Section \ref{sec:simulations}.
\end{remark}

\subsubsection{Changepoint detection with unknown $\pi_0$}
\label{sec:unknown-pi0}

As part of training our classifier $\mathcal{A}$, we have access to labeled pre-change training data \\ $(X_1', Y_1'),...,(X_m', Y_m')$, but it is less common to have a sample of post-change data, and so the post-change parameter $\pi_0$ is often unknown. To overcome an unknown $\pi_0$, we mix over a set $\Pi_0 \subset [0, 1]$ of potential values for the
post-change parameter, with a weight distribution $w$. Here we are inspired by
the work of \citet{lai1998information}, which deals with the computational complexity involved in the integration by considering a window-limited approach that uses only a fixed number of the most
recent observations. Let $\Pi_0$ be the set of possible values for $\pi_0$, and
let $w(\pi_0)$ be a density on $\Pi_0$. Each potential $\pi_0$ results in a
different likelihood ratio function $\lambda_{\pi_0}$. Lai defines a
CUSUM-type mixture stopping rule with detection statistic $R_{t, w}$ and
stopping time $T_w(A)$ \citep{lai1998information}:
\begin{align}
\label{eq:lai-mix}
R_{t, w} = \max \limits_{t - m_\alpha \leq k \leq t} \int \limits_{\Pi_0} \prod \limits_{i=k}^t \lambda_{\pi_0}(X_i) w(\pi_0) d\pi_0 \hspace{1cm}
T_w(A) = \inf \{t \geq 1: R_{t,w} \geq A \},
\end{align} 
where $m_\alpha$ is the window size. In our label shift setting, we have
\begin{align}
\lambda_{\pi_0}(x) = \frac{\pi_0 f_{\infty, X|Y=1}(x) + (1 - \pi_0) f_{\infty, X|Y=0}(x)}{\pi_\infty f_{\infty, X|Y=1}(x) + (1 - \pi_\infty) f_{\infty, X|Y=0}(x)}.
\end{align}For each $\pi_0$, we replace $\lambda_{\pi_0}$ with its estimate $\widehat{\lambda}_{\pi_0, \mathcal{A}, m}$ from \eqref{eq:est-lr}, yielding the detection statistic $\widetilde{R}_{t, w}$ and stopping time $\widetilde{T}_w(A)$:
\begin{align}
  \widetilde{R}_{t, w} = \max \limits_{t - m_\alpha \leq k \leq t} \int \limits_{\Pi_0} \prod \limits_{i=k}^t \widehat{\lambda}_{\pi_0, \mathcal{A}, m}(X_i) w(\pi_0) d\pi_0 \hspace{1cm}
\widetilde{T}_w(A) = \inf \{t \geq 1: \widetilde{R}_{t,w} \geq A \}.\label{eq:rtw}
\end{align}
When $\mathcal{A}(X_i)$ is a good estimate of $\mathbb{P}_\infty(Y_i = 1 | X_i)$, we expect $\widetilde{T}_w(A)$ \eqref{eq:rtw} to be close to $T_w(A)$ \eqref{eq:lai-mix}. We formalize this statement in Section \ref{sec:theory-unknown-pi0}, and we show that our estimated mixing procedure performs well on real dengue data in Section \ref{sec:application}.

An alternative to mixing over $\Pi_0$ is to maximize over possible values of $\pi_0$ at each time step. This is the generalized likelihood ratio (GLR) approach, and has also been studied in previous research (see, e.g., \citet{siegmund1995using}). For exponential families, some optimality properties of the GLR have been shown, but it is typically harder to control the average run length to false alarm \citep{tartakovsky2014sequential}. Another option is to perform detection with a worst-case $\pi_0^* \in \Pi_0$ \citep{unnikrishnan2011minimax}, which provides a worst-case bound on detection delay. 

\subsection{Background and related literature}
\label{sec:background}

\subsubsection{Label shift testing} 

Non-sequential two-stample tests between training and test data have been
proposed for detecting label shift between batches of data.
\citet{saerens2002adjusting} propose a likelihood ratio 
test, based on expectation-maximization.
In \citet{lipton2018detecting}, the
authors note that label shift implies a change in the distribution of
classifier predictions, and therefore use a two-sample test directly on the
training and test set predictions to detecting the change. This is expanded by
\citet{rabanser2019failing}, who recommend tests for label shift as a general method for detecting distributional changes, even if the label shift assumption is not met. For sequential detection of label shift, \citet{ackerman2020sequential} implement the nonparametric detection procedure in \citet{ross2012two} based on repeated Cramer--von-Mises tests for a change in distribution, using the \texttt{cpm} package \citep{cpm} in \texttt{R}. Like \citet{ackerman2020sequential}, we consider the problem of detecting label shift in a sequential setting, rather than a batch setting. In the sequential setting, label shift detection is an example of the classic problem of sequential changepoint detection, and we apply nonparametric sequential detection tools to the label shift problem. In contrast to \citet{ackerman2020sequential}, we use classifier predictions and the label shift assumption to directly approximate the likelihood ratio for sequential detection, which allows us to construct asymptotically optimal nonparametric label shift detection procedures.

\subsubsection{Nonparametric detection procedures} 

Because the pre- and post-change
data distributions are rarely known in practice, a variety of nonparametric
detection procedures have been proposed. For example, several
authors have adapted nonparametric hypothesis tests to the changepoint
detection problem, such as Kolmogorov-Smirnov tests \citep{madrid2019sequential},
Cramer-von-Mises tests \citep{ross2012two}, and graph-based nearest-neighbors
tests \citep{chen2019sequential, chu2018sequential}. Another common approach is
to replace the likelihood ratio $\lambda$ with an estimate $\widehat{\lambda}$.
Often, this estimate $\widehat{\lambda}$ is 
constructed to detect specific types of expected changes, such as shifts in the mean
or variance \citep{brodsky1993nonparametric, brodsky2000diagnosis,
tartakovsky2012efficient, tartakovsky2006detection,  tartakovsky2006novel}, or
a change to a stochastically larger/smaller distribution
\citep{bell1994efficient, gordon1994efficient, gordon1995robust,
mcdonald1990cusum}. 

Other authors employ nonparametric density ratio estimates that utilize samples from both the pre- and post-change distributions $\mathbb{P}_\infty$ and $\mathbb{P}_0$. For example, \citet{baron2000nonparametric} estimates the post-change
distribution sequentially with a histogram density estimator, while another approach is to choose the ratio that maximizes an estimate of the divergence between the pre- and post-change distributions \citep{nguyen2010estimating, kanamori2009least, kawahara2009change, sugiyama2008direct, liu2013change}. Similarly, the kernel mean matching approach \citep{gretton2009covariate, yu2012analysis} estimates the ratio by matching moments after mapping into a reproducing kernel Hilbert space (RKHS), while \cite{bickel2009discriminative} propose training a classifier to predict whether data comes from the pre- or post-change distribution.

In the label shift case, the difficulty of having samples from the post-change
distribution is reduced to knowing the post-change parameter $\pi_0$ (see
\eqref{eq:label-shift}). In the following section, we propose a simple estimate
of the likelihood ratio as a linear function of the classifier scores
$\mathcal{A}(X_i)$. An advantage of this method is that the existing classifier
can be used without additional estimation, and performance of the detection procedure is directly related to performance of the classifier. In addition, when $\pi_0$ is unknown it is easy to calculate the likelihood ratio over a range of potential values $\Pi_0$, without having to re-estimate the ratio each time (see Section \ref{sec:unknown-pi0}).

\subsubsection{Operating characteristics}  

The performance of sequential detection procedures, with stopping time $T^x(A)$ at threshold $A$, is typically assessed by two operating characteristics, the \textit{average time to false alarm} $\mathbb{E}_\infty(T^x(A))$ (also called the \textit{average run length}, or ARL), and the \textit{average detection delay} $\mathbb{E}_0(T^x(A))$, which are expected stopping times under the pre- and post-change distributions respectively (Figure \ref{fig:cp-overview}(c) and \ref{fig:cp-overview}(d)). The goal is to minimize the average detection delay, subject to a lower bound on the average time to false alarm, and the CUSUM and Shiryaev-Roberts procedures are known to be optimal or approximately optimal for this problem \citep{lorden1971procedures, moustakides1986optimal, tartakovsky2012third}. We therefore compare average detection delay and average time to false alarm as a way to assess procedures in this manuscript.

\section{Operating characteristics of nonparametric detection procedures}
\label{sec:theory}

In Section \ref{sec:proposed-method}, we introduced a nonparametric method for detecting label shift in sequential data. Our method relies on the classifier predictions $\mathcal{A}(X_i)$, which can be used to directly estimate the likelihood ratio under the label shift assumption (see \eqref{eq:est-lr} and \eqref{eq:rtw}). Our nonparametric procedure defines a stopping time $\widetilde{T}^x(A)$ (when the post-change proportion $\pi_0$ is known) or $\widetilde{T}_w(A)$ (when $\pi_0$ is unknown). Our stopping time is an approximation of the \textit{optimal} stopping time $T^x(A)$ (or $T_w(A)$) which depends on the true likelihood ratio $\lambda = f_{0,X}/f_{\infty, X}$. 

Naturally, we expect that the better $\mathcal{A}(x)$ estimates $\mathbb{P}_\infty(Y = 1 | X = x)$, the closer $\widetilde{T}^x(A)$ and $\widetilde{T}_w(A)$ are to $T^x(A)$ and $T_w(A)$. The purpose of this section is to formalize how detection performance depends on the classifier $\mathcal{A}$, which provides insight on the performance of our method and on selecting a classifier for changepoint detection. In Section \ref{sec:define-performance}, we define our measures of detection performance, based on convergence of expected stopping times. In Section \ref{sec:continuous-convergence}, we provide conditions under which this convergence holds, and in Theorem \ref{thm:inf-norm-rate-conv} we give an upper bound on the rate of convergence which depends directly on the likelihood ratio estimate $\widehat{\lambda}$. Theorem \ref{thm:inf-norm-rate-conv} applies beyond the label shift setting to any estimate of the likelihood ratio, so we state it in generality and discuss our results in the context of label shift. For the special case of label shift, Theorem \ref{thm:inf-norm-rate-conv} requires $\pi_0$ to be known, and so we generalize our results to an unknown $\pi_0$ in Section \ref{sec:theory-unknown-pi0}. Finally, Theorem \ref{thm:inf-norm-rate-conv} assumes that $\mathcal{A}(X)$ is continuous, which occurs when $\mathcal{A}(x) \in [0, 1]$ estimates $\mathbb{P}_\infty(Y = 1 | X = x)$. In Section \ref{sec:separable-dists}, we describe conditions under which similar convergence results hold for binary classifiers $\mathcal{A}(X) \in \{0, 1\}$. Examples illustrating our results are provided in Section \ref{sec:examples}.

\subsection{Assessing performance of a nonparametric detection procedure}
\label{sec:define-performance}

As discussed in Section \ref{sec:background}, it is common to assess detection performance with the expected stopping times $\mathbb{E}_\infty(T)$ and $\mathbb{E}_0(T)$ under the pre- and post-change distributions. Previous results on the performance of nonparametric detection procedures have typically focused on the relative efficiency of estimated procedures compared to optimal performance (see, e.g., \cite{bell1994efficient, unnikrishnan2011minimax}), by examining the limiting behavior of the ratio $\mathbb{E}_0(\widetilde{T}^x(\widetilde{A}))/\mathbb{E}_0(T^x(A))$ as $\widetilde{A}, A \to \infty$, where $\widetilde{A}$ is chosen so that $\mathbb{E}_\infty(\widetilde{T}^x(\widetilde{A})) = \mathbb{E}_\infty(T^x(A))$.

While relative efficiency is useful for comparing detection procedures, and it is natural to compare detection delay at the same average time to false alarm, formal results are typically asymptotic in the thresholds $A$ and $\widetilde{A}$. As an alternative, we consider a single fixed threshold $A$ and the differences $|\mathbb{E}_i(\widetilde{T}^x(A)) - \mathbb{E}_i(T^x(A))|$, $i \in \{0, \infty\}$. To emphasize the dependence of $\mathbb{E}_i(\widetilde{T}^x(A))$ on $\widehat{\lambda}$, we will write $\mathbb{E}_i(\widetilde{T}^x(A) | \widehat{\lambda}_m)$, where the subscript $m$ is used to show the dependence of $\widehat{\lambda}_m$ on the size of the training set. Ideally, $\mathbb{E}_i(\widetilde{T}^x(A) | \widehat{\lambda}_m) \overset{p}{\to} \mathbb{E}_i(T^x(A))$ as $m \to \infty$. In the following sections, we provide conditions under which this convergence in probability holds, and we provide upper bounds on the rate of convergence.

\subsection{Convergence for continuous likelihood ratios}
\label{sec:continuous-convergence}

We first consider the case when $\lambda(X)$ and $\widehat{\lambda}_m(X)$ are continuous, which is a common assumption in the changepoint detection literature. Under assumption (A1) - (A5) below, the convergence of $|\mathbb{E}_i(\widetilde{T}^x(A)|\widehat{\lambda}_m) - \mathbb{E}_i(T^x(A))|$ depends on the total variation distance between the distributions of $\lambda(X)$ and $\widehat{\lambda}_m(X)$. Our main result is Theorem \ref{thm:inf-norm-rate-conv}, which is not restricted to the label shift setting; let $\widehat{\lambda}_m$ be any likelihood ratio estimate from a training sample of size $m$ (not just \eqref{eq:est-lr}).

\begin{enumerate}
\item[(A1)] The detection procedure is defined by a detection statistic $R_t^x$ and a stopping time $T^x(A) = \inf \{t \geq 1: R_t^x \geq A\}$, with $R_t^x = \Psi(R_{t-1}^x) \lambda(X_t)$ and $R_0^x = x$. Likewise, the estimated detection procedure is defined by $\widetilde{T}^x(A) = \inf \{t \geq 1: \widetilde{R}_t^x(A) \geq A\}$, with $\widetilde{R}_t^x = \Psi(\widetilde{R}_{t-1}^x) \widehat{\lambda}_m(X_t)$ and $\widetilde{R}_0^x = x$.

\item[(A2)] The distributions of $\lambda(X)$ and $\widehat{\lambda}_m(X)$ are continuous, with $X \sim \mathbb{P}_i$.

\item[(A3)] The densities $f^i_\lambda$ and $f^i_{\widehat{\lambda}_m}$ of $\lambda(X)$ and $\widehat{\lambda}_m(X)$ under $\mathbb{P}_i$ satisfy the following assumptions:
\begin{enumerate}
\item $\int \limits_0^A (f^i_{\lambda}(s))^2 ds  \ < \infty$ and $\int \limits_0^A (f^i_{\widehat{\lambda}_m}(s))^2 ds  \ < \infty$,

\item $\lim \limits_{\varepsilon \to 0} \sup \limits_{x, z \in [0, A]; |x - z| < \varepsilon} \int \limits_0^A |f^i_\lambda \left( \frac{y}{\Psi(x)} \right) - f^i_\lambda \left( \frac{y}{\Psi(z)} \right)| dy \ = 0$ (and likewise for $f^i_{\widehat{\lambda}_m}$).
\end{enumerate}

\item[(A4)] The functions $\Psi(r)$ and $1/\Psi(r)$ are Lipschitz continuous on $[0, A]$, and $\Psi(r) \geq 1$ for all $r$.

\item[(A5)] The total variation distance $TV(f^i_\lambda, f^i_{\widehat{\lambda}_m}) = \int |f^i_\lambda(s) - f^i_{\widehat{\lambda}_m}(s)| ds \ \overset{p}{\to} 0$ as $m \to \infty$.
\end{enumerate}

Assumption (A1) holds for standard detection procedures like CUSUM, Shiryaev-Roberts, and variants. Note that the assumption that $\lambda(X)$ is continuous is common in the changepoint detection literature, and is relied on by many approaches to calculating or approximating the expected stopping time of a detection procedure. In general we are interested in $|\mathbb{E}_i(\widetilde{T}^x(A)|\widehat{\lambda}_m) - \mathbb{E}_i(T^x(A))|$ for both $i = 0$ and $i = \infty$, which means (A2) requires $X$ is continuous and has the same support under $\mathbb{P}_\infty$ and $\mathbb{P}_0$, such as the ratio for two normal distributions (see Example \ref{ex:normal-shift} below). Assumption (A4) holds for common $\Psi$, such as $\Psi(r) = \max\{1, r\}$ (CUSUM) and $\Psi(r) = 1 + r$ (Shiryaev-Roberts). Assumption (A3) can be thought of as a requirement that the distributions of $\lambda(X)$ and $\widehat{\lambda}_m(X)$ are not close to having any point masses. A sufficient condition for (A3) is that the densities $f^i_\lambda$ and $f^i_{\widehat{\lambda}_m}$ are Lipschitz continuous.

\begin{theorem}
\label{thm:inf-norm-rate-conv}
Suppose that assumptions (A1) - (A5) hold. Then 
\begin{align}
|\mathbb{E}_i(\widetilde{T}^x(A) | \widehat{\lambda}_m) - \mathbb{E}_i(T^x(A))| \leq O_P(TV(f^i_\lambda, f^i_{\widehat{\lambda}_m})).
\end{align}
If, in addition, $f_\lambda^i$ is bounded, then
\begin{align}
\label{eq:bounded-density-rate}
|\mathbb{E}_i(\widetilde{T}^x(A) | \widehat{\lambda}_m) - \mathbb{E}_i(T^x(A))| \leq O_P(\mathbb{E}_i( |\widehat{\lambda}_m(X) - \lambda(X)| ) + O_P(||F^i_{\widehat{\lambda}_m} - F^i_\lambda||_\infty ).
\end{align}
\end{theorem}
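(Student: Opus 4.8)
The plan is to reduce the comparison of expected stopping times to a perturbation bound for a Fredholm integral equation of the second kind. First I would note that, since the increments $\lambda(X_t)$ are i.i.d.\ and $R_t^x = \Psi(R_{t-1}^x)\lambda(X_t)$ depends on the past only through $R_{t-1}^x$, the statistic $R_t^x$ is a Markov chain on $[0,A)$ absorbed at $A$. Writing $\ell(x) = \mathbb{E}_i(T^x(A))$ and applying first-step analysis gives the renewal equation
\begin{align}
\ell(x) = 1 + \int_0^A \ell(r)\, \tfrac{1}{\Psi(x)} f^i_\lambda\!\left(\tfrac{r}{\Psi(x)}\right) dr = 1 + (\mathcal{K}\ell)(x),
\end{align}
and analogously $\widetilde{\ell}(x) = \mathbb{E}_i(\widetilde{T}^x(A)\mid\widehat{\lambda}_m) = 1 + (\widetilde{\mathcal{K}}\widetilde{\ell})(x)$, where $\widetilde{\mathcal{K}}$ has the same kernel with $f^i_\lambda$ replaced by $f^i_{\widehat{\lambda}_m}$. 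Assumption (A4) ($\Psi\ge 1$) keeps the transformed argument $r/\Psi(x)$ in a bounded range, and (A2)--(A3) make the kernels well defined.

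Next I would show that $I-\mathcal{K}$ is boundedly invertible on $C[0,A]$ with controllable norm. Assumption (A3a) makes the kernels square-integrable, hence the operators Hilbert--Schmidt and compact, while the $L^1$-equicontinuity in (A3b) together with the Lipschitz control of $\Psi,1/\Psi$ in (A4) makes $x\mapsto \mathcal{K}(x,\cdot)$ continuous, so $\mathcal{K}$ sends bounded functions to continuous ones and $\ell$ is continuous on the compact set $[0,A]$, hence bounded. Because the procedure stops in finite expected time, $\sup_x \mathbb{P}_i(T^x(A)>n)\to 0$, so $\|\mathcal{K}^{n_0}\|_\infty<1$ for some $n_0$; this forces the Neumann series $\sum_n \mathcal{K}^n$ to converge in operator norm, giving $\|(I-\mathcal{K})^{-1}\|_\infty =: M<\infty$.

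For the first bound I would subtract the two renewal equations to obtain $(I-\widetilde{\mathcal{K}})(\widetilde{\ell}-\ell)=h$ with $h(x)=\int_0^A[\widetilde{\mathcal{K}}(x,r)-\mathcal{K}(x,r)]\ell(r)\,dr$; substituting $s=r/\Psi(x)$ and using $\|\ell\|_\infty<\infty$ gives $\|h\|_\infty \le \|\ell\|_\infty\, TV(f^i_\lambda,f^i_{\widehat{\lambda}_m})$. The same substitution yields $\|\widetilde{\mathcal{K}}-\mathcal{K}\|_\infty \le TV(f^i_\lambda,f^i_{\widehat{\lambda}_m})$, which by (A5) tends to $0$ in probability; a standard perturbation argument then gives $\|(I-\widetilde{\mathcal{K}})^{-1}\|_\infty\le 2M$ with probability tending to one, whence $\|\widetilde{\ell}-\ell\|_\infty = \|(I-\widetilde{\mathcal{K}})^{-1}h\|_\infty \le O_P(TV(f^i_\lambda,f^i_{\widehat{\lambda}_m}))$. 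I expect the main obstacle to be exactly this step: controlling $(I-\widetilde{\mathcal{K}})^{-1}$ uniformly in $m$, i.e.\ ruling out that the random perturbation $\widetilde{\mathcal{K}}$ pushes the spectral radius toward $1$ and inflates $\widetilde{\ell}$. The compactness and equicontinuity in (A3) are precisely what let operator-norm convergence $\widetilde{\mathcal{K}}\to\mathcal{K}$ transfer to convergence of the inverses.

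Finally, for the refined bound I would use boundedness of $f^i_\lambda$ to upgrade $\ell$ from merely continuous to Lipschitz: differencing the renewal equation and using that a bounded $f^i_\lambda$ with Lipschitz $\Psi$ makes $\mathcal{K}(x,\cdot)$ Lipschitz in $x$ shows $\ell$ is Lipschitz on $[0,A)$. I would then rewrite $h(x)=\mathbb{E}_i[L(\Psi(x)\widehat{\lambda}_m(X))-L(\Psi(x)\lambda(X))]$ with $L(r)=\ell(r)\mathbf{1}\{r<A\}$ and split $L = L_{\mathrm{cont}} + L_{\mathrm{jump}}$, where $L_{\mathrm{jump}}(r)=\ell(A^-)\mathbf{1}\{r<A\}$ absorbs the downward jump at the absorbing boundary and $L_{\mathrm{cont}}$ is Lipschitz. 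The Lipschitz part contributes $O(\mathbb{E}_i|\widehat{\lambda}_m(X)-\lambda(X)|)$ by the mean-value bound together with $\sup_{[0,A]}\Psi<\infty$, while the indicator part contributes exactly $\ell(A^-)\,[F^i_{\widehat{\lambda}_m}(A/\Psi(x))-F^i_\lambda(A/\Psi(x))] = O(\|F^i_{\widehat{\lambda}_m}-F^i_\lambda\|_\infty)$, boundedness of $f^i_\lambda$ guaranteeing no mass concentrates at the boundary. Propagating these two pieces through $(I-\widetilde{\mathcal{K}})^{-1}$ as before yields \eqref{eq:bounded-density-rate}.
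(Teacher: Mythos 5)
Your proposal follows essentially the same route as the paper's proof: the first-step analysis yielding a Fredholm integral equation of the second kind with kernel $\tfrac{1}{\Psi(x)}f^i_\lambda(r/\Psi(x))$, compactness of the integral operators on $C([0,A])$ from (A3), the operator-norm bound $\|\widetilde{\mathcal{K}}-\mathcal{K}\|_\infty \lesssim TV(f^i_\lambda, f^i_{\widehat{\lambda}_m})$, the standard perturbation bound for $(I-\widetilde{\mathcal{K}})^{-1}$, and, for the refined bound, the decomposition of the resolvent's right-hand side into a Lipschitz part (your mean-value bound is exactly the Kantorovich--Rubinstein duality the paper invokes, giving $\mathbb{E}_i|\widehat{\lambda}_m(X)-\lambda(X)|$) plus a jump-at-the-boundary part (giving $\|F^i_{\widehat{\lambda}_m}-F^i_\lambda\|_\infty$). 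All of that matches.

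The one step where your argument does not close as written is the invertibility of $I-\mathcal{K}$. You justify $\|\mathcal{K}^{n_0}\|_\infty<1$ by asserting that ``the procedure stops in finite expected time,'' but finiteness and boundedness of $\ell(x)=\mathbb{E}_i(T^x(A))$ is precisely what the bounded invertibility of $I-\mathcal{K}$ is being used to deliver (you also need $\|\ell\|_\infty<\infty$ to bound $\|h\|_\infty$), so as stated the step is circular. It can be repaired --- either by the route the paper takes (the spectral radius of $\mathcal{K}$ is strictly less than $1$ when $\lambda(X)$ has a continuous distribution, citing \cite{moustakides2011numerical}, after which the Fredholm alternative gives a bounded inverse on $C([0,A])$), or by independently establishing uniform geometric tail bounds $\sup_x\mathbb{P}_i(T^x(A)>n)\to 0$ from nondegeneracy of $\lambda(X)$ and monotonicity of $T^x$ in $x$ --- but some such argument must be supplied rather than assumed. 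With that step replaced, the rest of your proof goes through and coincides with the paper's.
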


\begin{proof}
See the Supplementary Materials, Section \ref{sec:theorem1-proof}.
\end{proof}

\subsubsection{Application to label shift}

The estimate $\widehat{\lambda}_{\mathcal{A},m}$ from \eqref{eq:est-lr} can be reasonably assumed continuous when $\mathcal{A}(X) \in [0,1]$ is a predicted probability. The bound in Theorem \ref{thm:inf-norm-rate-conv} depends on the distribution of the likelihood ratio, which is intuitive as our detection procedure includes a random draw from this distribution at each step (see (A1)). As we can see from \eqref{eq:bounded-density-rate}, when $f_\lambda^i$ is bounded, convergence depends directly on the $L_1$ rate of convergence of the likelihood ratio estimate $\widehat{\lambda}_m$. In the classifier setting, this is equivalent to the $L_1$ convergence of the classifier scores $\mathcal{A}(X)$ to the true probabilities $\mathbb{P}_\infty(Y = 1 | X)$, illustrating that performance of the changepoint detection procedure depends on performance of the classifier. If parametric assumptions are made, then we can expect parametric rates of convergence for the operating characteristics, as discussed in Example \ref{ex:normal-shift} and Example \ref{ex:lda-label-shift} below. Note that without parametric assumptions, the optimal $L_1$ rate of convergence is $m^{-p/(2p + d)}$ if $\lambda$ is a $p$-times differentiable function \citep{stone1982optimal}, though not all classifiers will converge. In general, results for nonparametric regression are more common for $L_2$ convergence rather than $L_1$ convergence, and the same is true for other methods of density ratio estimation -- for example, \cite{nguyen2010estimating} provide conditions under which a density ratio estimate from divergence maximization converges in Hellinger distance. Finally, note that convergence in \eqref{eq:bounded-density-rate} requires $\pi_0$ to be known for the estimate $\widehat{\lambda}_{\mathcal{A}, m}$ from \eqref{eq:est-lr}. In the next section, we focus specifically on the label shift setting and consider the situation where $\pi_0$ is unknown, to generalize convergence results to our mixture procedure in \eqref{eq:rtw}.

\begin{remark}
Consider the classifier label shift setting, with unlabeled data, where the true likelihood ratio $\lambda(x) = f_{0,X}(x)/f_{\infty, X}(x)$ is given by \eqref{eq:scoreratio}. Suppose that $\lambda(X)$ is continuous, but we use a binary classifier with $\mathcal{A}(X) \in \{0, 1\}$ (for example, by thresholding predicted probabilities), and estimate $\lambda$ with $\widehat{\lambda}_{\mathcal{A},m}$ as in \eqref{eq:est-lr}. Because the binary predictions $\mathcal{A}(X)$ will never converge to the true probabilities $\mathbb{P}_\infty(Y=1|X)$, then $\mathbb{E}_i(\widetilde{T}^x(A)|\widehat{\lambda}_{\mathcal{A},m})$ won't converge to $\mathbb{E}_i(T^x(A))$. Therefore, if $\mathbb{P}_\infty(Y=1|X)$ is expected to be a continuous function of $X$, it is better to use predicted probabilities, rather than binary predictions, for changepoint detection. Binary predictions are only suitable when the classes $X|Y = 0$ and $X|Y = 1$ are separable, as we discuss below in Section \ref{sec:separable-dists}.
\end{remark}

\subsection{Convergence of label shift detection with unknown $\pi_0$}
\label{sec:theory-unknown-pi0}

In the label shift setting, Theorem \ref{thm:inf-norm-rate-conv} applies when $\lambda(X)$ and $\widehat{\lambda}_{\mathcal{A}, m}(X)$ are continuous, and $\pi_0$ is known or can be consistently estimated. However, as discussed in Section \ref{sec:unknown-pi0}, we typically expect $\pi_0$ to be unknown in practice. Here we explicitly consider the classifier label shift setting, with likelihood ratio estimate $\widehat{\lambda}_{\pi_0, \mathcal{A}, m}$ for each potential $\pi_0$ given by \eqref{eq:est-lr}. Note that an advantage of this likelihood ratio estimate is that mixing over $\Pi_0$ is simple: we need only change $\pi_0$ in \eqref{eq:est-lr}. In contrast, two-sample ratio estimation procedures like those discussed in \cite{gretton2009covariate}, \cite{nguyen2010estimating}, and \cite{kanamori2009least} would require a new post-change sample and a new likelihood ratio estimate to be calculated for each $\pi_0$.

Since $\pi_0$ is unknown, we apply the CUSUM-type mixture stopping rule \eqref{eq:rtw} discussed in Section \ref{sec:unknown-pi0}. As $\widetilde{R}_{t,w}$ cannot be written recursively, the proof techniques of Theorem \ref{thm:inf-norm-rate-conv} do not apply, but it is still possible to show that $\mathbb{E}_i(\widetilde{T}_w(A) | \mathcal{A})$ is consistent for $\mathbb{E}_i(T_w(A))$ as $m \to \infty$, though we lose the ability to provide a rate of convergence:

\begin{theorem}
\label{thm:dens-est-mix}
Let $\Pi_0 = [a, b]$ where $0 < a \leq b < 1$, and suppose there exist sets $\mathcal{S}_c$ indexed by $c > 0$, such that $\mathcal{S}_{c_1} \subseteq \mathcal{S}_{c_2}$ when $c_1 > c_2$, $\lim \limits_{c \to 0} \mathbb{P}_i(X \in \mathcal{S}_c) = 1$, and such that $\sup \limits_{\substack{x \in \mathcal{S}_c \\ \pi_0 \in \Pi_0}} | \widehat{\lambda}_{\pi_0, \mathcal{A}, m}(x) - \lambda_{\pi_0}(x)| \overset{p}{\to} 0$ for all $c$. Furthermore, assume that $\mathbb{E}_i(T_w(A))$ is a continuous function of $A$. Then for all $A > 0$, $\mathbb{E}_i(\widetilde{T}_w(A) | \mathcal{A}) \overset{p}{\to} \mathbb{E}_i(T_w(A))$.
\end{theorem}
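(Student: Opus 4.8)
The plan is to reduce the convergence of the expected stopping times to a comparison of the detection statistics $\widetilde{R}_{t,w}$ and $R_{t,w}$, and then to exploit the monotonicity of $A \mapsto \mathbb{E}_i(T_w(A))$ together with its assumed continuity. The starting observation is that, because $\Pi_0 = [a,b] \subset (0,1)$ and both $\mathcal{A}(x)$ and $\mathbb{P}_\infty(Y=1 \mid X=x)$ lie in $[0,1]$, the affine representation \eqref{eq:est-lr} forces $\lambda_{\pi_0}(x)$ and $\widehat{\lambda}_{\pi_0, \mathcal{A}, m}(x)$ into a fixed compact interval $[L,U] \subset (0,\infty)$, uniformly in $x$ and in $\pi_0 \in \Pi_0$. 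Since each statistic is a maximum over a window of length at most $m_\alpha + 1$ of $w$-averaged products of such factors, both $R_{t,w}$ and $\widetilde{R}_{t,w}$ take values in a fixed compact interval $[L_*, U_*] \subset (0,\infty)$. This boundedness is what converts the additive control in the uniform-convergence hypothesis into multiplicative control: on the event $G_m = \{\sup_{x \in \mathcal{S}_c,\, \pi_0 \in \Pi_0} |\widehat{\lambda}_{\pi_0, \mathcal{A}, m}(x) - \lambda_{\pi_0}(x)| \le \eta\}$, every factor coming from an observation in $\mathcal{S}_c$ satisfies $\widehat{\lambda}_{\pi_0}(x)/\lambda_{\pi_0}(x) \in [1 - \eta/L, 1 + \eta/L]$, and by assumption $\mathbb{P}(G_m) \to 1$ for each fixed $c, \eta$.

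The core comparison is a sandwich. Whenever every observation entering the relevant windows up to time $t$ lies in $\mathcal{S}_c$, multiplying at most $m_\alpha + 1$ factors gives $(1 - \rho_-) R_{t,w} \le \widetilde{R}_{t,w} \le (1 + \rho_+) R_{t,w}$ with $\rho_\pm \to 0$ as $\eta \to 0$, whence
\[
T_w\!\big(A/(1+\rho_+)\big) \le \widetilde{T}_w(A) \le T_w\!\big(A/(1-\rho_-)\big).
\]
Taking $\mathbb{E}_i$, using monotonicity of $\mathbb{E}_i(T_w(\cdot))$, and letting $\eta \to 0$ would deliver the claim directly from continuity of $\mathbb{E}_i(T_w(\cdot))$ at $A$. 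I will also record that this continuity implies $\mathbb{P}_i(\max_{s \le t} R_{s,w} = A) = 0$ for every $t$, i.e.\ the threshold is not an atom of any finite-horizon maximum; this is what keeps the sandwich non-degenerate in the limit.

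The main obstacle is that over an infinite horizon some observations inevitably fall outside $\mathcal{S}_c$, so the clean pathwise sandwich cannot hold for all $t$ at once. I would handle this by truncation. Writing $\mathbb{E}_i(\widetilde{T}_w(A) \mid \mathcal{A}) = \sum_{t \ge 0} \mathbb{P}_i(\widetilde{T}_w(A) > t \mid \mathcal{A})$, I split the sum at a level $N$. For the finite-horizon part $t < N$ only finitely many observations are relevant, so the event that all of them lie in $\mathcal{S}_c$ has probability at least $1 - t\,\mathbb{P}_i(X \notin \mathcal{S}_c)$; on this event and on $G_m$ the statistics are multiplicatively close, so $|\mathbb{P}_i(\widetilde{T}_w > t \mid \mathcal{A}) - \mathbb{P}_i(T_w > t)|$ is at most $t\,\mathbb{P}_i(X \notin \mathcal{S}_c) + \mathbb{P}_i(\max_{s \le t} R_{s,w} \in J_\eta(A))$, where the band $J_\eta(A) \downarrow \{A\}$ as $\eta \to 0$. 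Summing the finitely many terms, the first contribution is $O(N^2 \mathbb{P}_i(X \notin \mathcal{S}_c)) \to 0$ as $c \to 0$, and the second $\to 0$ as $\eta \to 0$ by the no-atom property.

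The remaining ingredient is a uniform tail bound making the truncation error negligible. Here I use that the window-limited statistic depends only on the most recent $m_\alpha + 1$ observations, so the stopping time inherits a block/renewal structure: finiteness of $\mathbb{E}_i(T_w(A))$ (implicit in the continuity hypothesis) forces the fresh-window crossing probability to be strictly positive, giving $\mathbb{P}_i(T_w(A) > t) \le C \theta^t$. On $G_m$ with $\eta, c$ small, multiplicative closeness on the overwhelming fraction of blocks whose observations all lie in $\mathcal{S}_c$ keeps the estimated crossing probability bounded below, so $\mathbb{P}_i(\widetilde{T}_w(A) > t \mid \mathcal{A})$ obeys the same geometric bound uniformly over $\mathcal{A} \in G_m$; choosing $N$ large then controls both tails. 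Assembling the choices—$N$ from the tail, then $\eta$ and $c$ from the finite-horizon terms—yields $|\mathbb{E}_i(\widetilde{T}_w(A) \mid \mathcal{A}) - \mathbb{E}_i(T_w(A))| \le \epsilon$ on $G_m$, and since $\mathbb{P}(G_m) \to 1$ as $m \to \infty$ the asserted convergence in probability follows. The delicate point throughout is the order of limits: $N, \eta, c$ must all be fixed as functions of $\epsilon$ before sending $m \to \infty$, and the geometric tail constant must be shown uniform over the good classifiers so that the truncation is legitimate.
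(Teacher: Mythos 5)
Your proposal is correct in substance and shares the skeleton of the paper's proof---truncate the expectation at a finite horizon, compare true and estimated statistics on a good event where the relevant observations fall in $\mathcal{S}_c$ and the uniform error is small, invoke continuity of $A \mapsto \mathbb{E}_i(T_w(A))$ to absorb the resulting threshold perturbation, and control the tails separately---but it executes two steps by genuinely different means. First, where the paper works with truncated stopping times $U_{t_0}(A)=\min\{T_w(A),t_0\}$ and an additive sandwich $U_{t_0}(A-\varepsilon)\le\widetilde{U}_{t_0}(A)\le U_{t_0}(A+\varepsilon)$, you work with survival probabilities, a multiplicative sandwich, and a no-atom property $\mathbb{P}_i(\max_{s\le t}R_{s,w}=A)=0$, which you correctly note is itself a consequence of the continuity hypothesis (via $\mathbb{E}_i(T_w(A'))-\mathbb{E}_i(T_w(A))=\sum_t \mathbb{P}_i(A\le \max_{s\le t}R_{s,w}<A')$ and monotone convergence as $A'\downarrow A$); these are equivalent bookkeeping. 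The substantive difference is the tail of the estimated stopping time. The paper handles it with a pathwise minorization: it defines $\lambda_{\pi_0}^{\min}$ equal to $\lambda_{\pi_0}-\delta_1$ on $\mathcal{S}_{c_1}$ and to the global lower bound $\min\{a/\pi_\infty,(1-b)/(1-\pi_\infty)\}$ off $\mathcal{S}_{c_1}$, so that $\widehat{\lambda}_{\pi_0,\mathcal{A},m}\ge\lambda_{\pi_0}^{\min}$ everywhere (by the same affine-boundedness of \eqref{eq:est-lr} that you exploit), giving $\widetilde{T}_w(A)\le T_w^{\min}(A)$ and reducing the tail to finiteness of $\mathbb{E}(T_w^{\min}(A))$, which the paper asserts can be arranged by taking $\delta_1,c_1$ small. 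You instead prove uniform geometric tails via the window-limited block/renewal structure. Your route requires more care---you must establish a strictly positive crossing probability for a level slightly above $A$ (your no-atom observation supplies exactly this), uniformity of the geometric constants over classifiers in $G_m$, and conditional independence of well-separated blocks given $\mathcal{A}$---but it buys something the paper leaves implicit: it derives the tail control from the stated hypotheses, and the same block argument would in fact justify the paper's unproved finiteness assertion for $\mathbb{E}(T_w^{\min}(A))$. When writing your argument out in full, the point to nail down is the quantifier order you already flag: fix $\eta_0, c_0$ so the geometric constants are uniform on the good event for all $\eta\le\eta_0$, $c\le c_0$; choose $N$ from those constants; then shrink $\eta$ and $c$ for the finite-horizon terms; only then send $m\to\infty$.
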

\begin{proof}
See the Supplementary Materials, Section \ref{sec:thm-2-proof}.
\end{proof}

In practice, the mixture procedure in \eqref{eq:rtw} often performs quite close to the procedure with estimated likelihood ratio \eqref{eq:est-lr}, and performance improves when $\Pi_0$ can be narrowed. We will see in Section \ref{sec:application} that the mixture procedure outperforms other nonparametric detection procedures which don't require knowledge of $\pi_0$ for detecting a change in dengue prevalence.

\subsection{Label shift detection with binary predictions: separable class distributions}
\label{sec:separable-dists}

As discussed above, if $\lambda(X)$ is continuous then it makes sense to use predicted probabilities $\mathcal{A}(X) \in [0, 1]$ to construct the likelihood ratio estimate $\widehat{\lambda}_{\mathcal{A}, m}(X)$. However, it is common to use binary classifiers with $\mathcal{A}(X) \in \{0, 1\}$. In this section, we provide a convergence result for label shift detection based on binary predictions, which depends on separable class distributions.

If $X|Y=0$ and $X|Y=1$ are separable, then optimal changepoint detection with the unlabeled data $X_1, X_2, X_3,...$ is equivalent to optimal changepoint detection with the labeled data $(X_1, Y_1), (X_2, Y_2), (X_3, Y_3),...$, with likelihood ratio 
\begin{align}
\label{eq:true-binary-lr}
\frac{f_{0, X,Y}(X_i, Y_i)}{f_{\infty, X, Y}(X_i, Y_i)} = \left( \frac{\pi_0}{\pi_\infty} - \frac{1 - \pi_0}{1 - \pi_\infty} \right) Y_i + \frac{1 - \pi_0}{1 - \pi_\infty}.
\end{align}
Assuming the classifier $\mathcal{A}$ can separate the two classes given enough training data, then the nonparametric detection procedure is asymptotically optimal, and convergence of $|\mathbb{E}_i(\widetilde{T}^x(A) | \widehat{\lambda}_m) - \mathbb{E}_i(T^x(A))|$ depends on the sensitivity and specificity of $\mathcal{A}$:

\begin{corollary}
\label{cor:binary-convergence}
Suppose that $\pi_\infty$ and $\pi_0$ are known, and furthermore that $\log(\pi_0/\pi_\infty)$ and $\log((1 - \pi_0)/(1 - \pi_\infty))$ are rational. The optimal detection procedure observes $(X_1, Y_1), (X_2, Y_2),...$, with likelihood ratio $\lambda(X_i, Y_i)$ given by \eqref{eq:true-binary-lr}, and detection statistic $R_t^x = \max\{1, R_{t-1}^x\} \lambda(X_t, Y_t)$. When the labels $Y_i$ are unobserved, the nonparametric detection procedure uses a binary classifier $\mathcal{A}$, with $\mathcal{A}(X) \in \{0,1\}$; the likelihood ratio $\widehat{\lambda}_{\mathcal{A},m}$ is given by \eqref{eq:est-lr}, and the detection statistic is $\widetilde{R}_t^x = \max\{1, \widetilde{R}_{t-1}^x\} \widehat{\lambda}_{\mathcal{A},m}(X_t)$. Then, if $\mathbb{P}_i(\mathcal{A}(X) = 1 | Y = 1, \mathcal{A}) \overset{p}{\to} 1$ and $\mathbb{P}_i(\mathcal{A}(X) = 0 | Y = 0, \mathcal{A}) \overset{p}{\to} 1$ as $m \to \infty$,
\begin{align}
\label{eq:binary-convergence-rate}
|\mathbb{E}_i(\widetilde{T}^x(A) | \widehat{\lambda}_m) - \mathbb{E}_i(T^x(A))| \leq O_P(\mathbb{P}_i(\mathcal{A}(X) = 0 | Y = 1, \mathcal{A}) + \mathbb{P}_i(\mathcal{A}(X) = 1 | Y = 0, \mathcal{A})).
\end{align}
\end{corollary}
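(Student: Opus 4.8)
The plan is to exploit the fact that, in the separable binary setting, both the oracle statistic $R_t^x$ and the estimated statistic $\widetilde{R}_t^x$ are driven by the \emph{same} two-point likelihood ratio. Since $\lambda(X_i, Y_i)$ in \eqref{eq:true-binary-lr} and $\widehat{\lambda}_{\mathcal{A},m}(X_i)$ in \eqref{eq:est-lr} both take values in $\{\pi_0/\pi_\infty,\, (1-\pi_0)/(1-\pi_\infty)\}$, the two recursions $R_t^x = \max\{1, R_{t-1}^x\}\lambda(X_t,Y_t)$ and $\widetilde{R}_t^x = \max\{1, \widetilde{R}_{t-1}^x\}\widehat{\lambda}_{\mathcal{A},m}(X_t)$ are identical maps fed by different binary inputs: the oracle reads $Y_t$, the estimate reads $\mathcal{A}(X_t)$. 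Passing to logarithms, both $\log R_t^x$ and $\log \widetilde{R}_t^x$ are random walks with increments in the common two-point set $\{a, b\}$, where $a = \log(\pi_0/\pi_\infty)$ and $b = \log((1-\pi_0)/(1-\pi_\infty))$, reflected at $0$ (the image of the $\max\{1,\cdot\}$ reset) and absorbed at $\log A$. This is the discrete counterpart of the continuous-density setting of Theorem \ref{thm:inf-norm-rate-conv}; the theorem does not apply here because $\lambda(X)$ is discrete, so a separate argument is required.

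First I would invoke the rationality hypothesis. Writing $a = p_1/q$ and $b = p_2/q$ with $p_1, p_2 \in \mathbb{Z}$ and $q \in \mathbb{N}$, every reachable state of either walk lies on the lattice $\tfrac{1}{q}\mathbb{Z}$, and since the non-absorbed states are confined to a bounded interval below $\log A$, each walk is a \emph{finite}-state absorbing Markov chain. Conditioning on the training data (hence on $\mathcal{A}$), the inputs $Y_t$ and $\mathcal{A}(X_t)$ are i.i.d.\ Bernoulli, so the only difference between the two chains is the probability assigned to the $+a$ increment: the oracle chain takes it with probability $\pi_i = \mathbb{P}_i(Y=1)$, while the estimated chain takes it with probability $\widehat p := \mathbb{P}_i(\mathcal{A}(X)=1 \mid \mathcal{A})$.

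The central step is to express the expected stopping time as a function of this single parameter. For a finite absorbing chain with substochastic transition matrix $Q(p)$ over the transient states, the vector of expected absorption times solves $(I - Q(p))h = \mathbf{1}$, so from the fixed (lattice) initial state one has $\mathbb{E}_i(T^x(A)) = h(\pi_i)$ and $\mathbb{E}_i(\widetilde{T}^x(A)\mid \mathcal{A}) = h(\widehat p)$, where $h(p) = [(I-Q(p))^{-1}\mathbf 1]_{\mathrm{init}}$. Because the entries of $Q(p)$ are affine in $p$, the map $h$ is a rational function of $p$, hence continuously differentiable and therefore locally Lipschitz on any compact subset of $(0,1)$ on which $I - Q(p)$ is invertible (which holds throughout $(0,1)$, since the chain is absorbing and the expected hitting times are finite). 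I expect this to be the main obstacle: one must verify that $I - Q(p)$ stays invertible with a uniformly bounded inverse on a fixed neighborhood of $\pi_i$, so that $h$ admits a single Lipschitz constant $L$ there. This is exactly where the finite-state structure afforded by rationality is indispensable.

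Finally I would combine the pieces. A direct computation gives $\widehat p - \pi_i = -\pi_i\,\mathbb{P}_i(\mathcal{A}(X)=0\mid Y=1,\mathcal{A}) + (1-\pi_i)\,\mathbb{P}_i(\mathcal{A}(X)=1\mid Y=0,\mathcal{A})$, whence $|\widehat p - \pi_i| \le \varepsilon_m$, where $\varepsilon_m$ denotes the sum of the false-negative and false-positive rates on the right-hand side of \eqref{eq:binary-convergence-rate}. The sensitivity and specificity assumptions force $\varepsilon_m \overset{p}{\to} 0$, so $\widehat p \overset{p}{\to} \pi_i$ and, with probability tending to one, $\widehat p$ lies in the neighborhood of $\pi_i$ on which $h$ is $L$-Lipschitz. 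On that event $|\mathbb{E}_i(\widetilde{T}^x(A)\mid \mathcal{A}) - \mathbb{E}_i(T^x(A))| = |h(\widehat p) - h(\pi_i)| \le L\,|\widehat p - \pi_i| \le L\,\varepsilon_m$, which is the claimed $O_P(\varepsilon_m)$ bound. The randomness here is entirely in the training sample through $\mathcal{A}$, so the $O_P$ is taken over that sample.
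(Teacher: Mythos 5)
Your proposal is correct and follows essentially the same route as the paper: the rationality assumption makes $R_t^x$ and $\widetilde{R}_t^x$ finite-state absorbing Markov chains on a common lattice, the expected stopping times solve $(I-K)v=\mathbf{1}$ and $(I-\widetilde{K})v=\mathbf{1}$, and the bound follows from perturbation of this linear system, with the perturbation controlled by the misclassification rates. The only cosmetic difference is that you parametrize the transition matrix by the single scalar $\widehat{p}=\mathbb{P}_i(\mathcal{A}(X)=1\mid\mathcal{A})$ and use local Lipschitzness of $p\mapsto[(I-Q(p))^{-1}\mathbf{1}]_{\mathrm{init}}$, whereas the paper bounds $|\widetilde{K}_{ij}-K_{ij}|$ entrywise and invokes a standard operator-perturbation theorem; the uniform boundedness of the inverse near $\pi_i$, which you correctly flag as the point needing verification, is exactly what that theorem supplies.
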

\begin{proof}
See the Supplementary Materials, Section \ref{sec:cor-1-proof}.
\end{proof}

The detection procedure in Corollary \ref{cor:binary-convergence} is simply a Bernoulli CUSUM procedure, and the right hand side of \eqref{eq:binary-convergence-rate} depends on the convergence of the specificity and sensitivity of the classifier $\mathcal{A}$, which requires that the two distributions $X|Y=0$ and $X|Y=1$ can be separated. This is a stronger requirement than in Theorem \ref{thm:inf-norm-rate-conv}: consistently estimating probabilities $\mathbb{P}_i(Y = 1 | X)$ can be possible even when consistently estimating labels is impossible. In practice, $X|Y=0$ and $X|Y=1$ are rarely perfectly separable, but Corollary \ref{cor:binary-convergence} is still helpful for seeing the relationship between classifier performance and changepoint performance.

The assumption that $\log(\pi_0/\pi_\infty)$ and $\log((1 - \pi_0)/(1 - \pi_\infty))$ are rational is needed to ensure that $R_t^x$ and $\widetilde{R}_t^x$ are Markov processes on the same state space, as is the restriction of $\Psi$ to the CUSUM $\Psi(x) = \max\{1,x\}$. In practice, \cite{reynolds1999cusum} show that the expected stopping time is close when $\log \lambda(X)$ is not rational but a rational approximation is used.

\subsection{Examples}
\label{sec:examples}

To help illustrate our theoretical results, we consider several specific change detection scenarios, which provide concrete examples of our results in action. First, we consider a simple univariate parametric setting. When we have a parametric model, we hope that the rate of convergence for our detection procedure is the same as the rate of convergence for the parameter estimates. In this example, we consider a shift in the mean of a normal distribution, which also demonstrates that our results extend beyond the label shift setting.

\begin{example}[Gaussian mean shift]
\label{ex:normal-shift}
Suppose it is known that under $\mathbb{P}_\infty$, $X \sim N(0, 1)$, and under $\mathbb{P}_0$, $X \sim N(\mu, 1)$, with $\mu > 0$. Then, the likelihood ratio $\lambda$ is given by $\lambda(x) = \exp\{\mu x - \mu^2/2\}$. Furthermore, we have a training sample $X_1',...,X_m' \overset{iid}{\sim} N(\mu, 1)$, with which we estimate $\mu$ by $\widehat{\mu}_m = \frac{1}{m} \sum \limits_{i=1}^m X_i'$; the likelihood ratio estimate is then $\widehat{\lambda}_m(x) = \exp\{\widehat{\mu}_m x - \widehat{\mu}_m^2/2\}$ (this is similar to the procedures proposed in \cite{tartakovsky2012efficient}). Then, the conditions for Theorem \ref{thm:inf-norm-rate-conv} hold with $TV(f^i_\lambda, f^i_{\widehat{\lambda}_m}) \leq O_P(|\widehat{\mu}_m - \mu|)$. Therefore, an upper bound on the rate of convergence for the estimated detection procedure is $|\mathbb{E}_i(\widetilde{T}^x(A) | \widehat{\lambda}_m) - \mathbb{E}_i(T^x(A))| \leq O_P(|\widehat{\mu}_m - \mu|) = O_P(1/\sqrt{m})$. Full details and calculations can be found in the Supplementary Materials (\ref{sec:ex-1-details}).
\end{example}

We now examine the case where we detect label shift with classifier predictions. To make the example clear, we consider a linear discriminant analysis (LDA) classifier, for which the classifier scores and their distribution have a closed form.

\begin{example}[LDA]
\label{ex:lda-label-shift} Suppose that under $\mathbb{P}_\infty$, $X \sim \pi_\infty N(\bm{\mu}_1, \bm{\Sigma}) + (1 - \pi_\infty)N(\bm{\mu}_0, \bm{\Sigma})$, while under $\mathbb{P}_0$,  $X \sim \pi_0 N(\bm{\mu}_1, \bm{\Sigma}) + (1 - \pi_0)N(\bm{\mu}_0, \bm{\Sigma})$. Suppose that $\pi_\infty$ and $\pi_0$ are known, and we have a training sample $X_1',...,X_m' \overset{iid}{\sim} \mathbb{P}_\infty$ with which we construct estimates $\widehat{\bm{\mu}}_1, \widehat{\bm{\mu}}_0, \widehat{\bm{\Sigma}}$. Our likelihood ratio estimate $\widehat{\lambda}_{\mathcal{A}, m}$ is given by \eqref{eq:est-lr}, where $\mathcal{A}$ is given by
\begin{align}
\label{eq:lda-score}
\mathcal{A}(X_i) = \frac{\pi_\infty \mathrm{MVN}(X_i; \widehat{\bm{\mu_1}}, \widehat{\bm{\Sigma}})}{\pi_\infty \mathrm{MVN}(X_i; \widehat{\bm{\mu_1}}, \widehat{\bm{\Sigma}}) + (1 - \pi_\infty)\mathrm{MVN}(X_i; \widehat{\bm{\mu_0}}, \widehat{\bm{\Sigma}})},
\end{align}
where $\mathrm{MVN}(\cdot; \bm{\mu}, \bm{\Sigma})$ denotes the multivariate normal density with mean $\bm{\mu}$ and covariance matrix $\bm{\Sigma}$. The true likelihood ratio is similar, just replacing $\widehat{\bm{\mu}}_1, \widehat{\bm{\mu}}_0, \widehat{\bm{\Sigma}}$ with the true parameters. (A3) follows because $f^i_\lambda$ and $f^i_{\widehat{\lambda}}$ can be shown to be Lipschitz, while (A5) follows from the strong consistency of $\widehat{\bm{\mu}}_i$ and $\widehat{\bm{\Sigma}}$. The rate of convergence in Theorem \ref{thm:inf-norm-rate-conv} depends on the rate of convergence for $||\widehat{\bm{\Sigma}}^{-1} - \bm{\Sigma}^{-1}||_F$, where $||\cdot||_F$ denotes Frobenius norm. Details are provided in the Supplementary Materials (\ref{sec:ex-2-details}).
\end{example}

While the formal results in Theorem \ref{thm:inf-norm-rate-conv} and Corollary \ref{cor:binary-convergence} depend on convergence of $\widehat{\lambda}(X)$ to $\lambda(X)$, detection performance will depend on classifier performance even if $\widehat{\lambda}(X)$ does not converge. Here we demonstrate that the relationship between detection delay and classifier performance still holds for mis-specified classifiers, and that the expected $L_1$ distance from Theorem \ref{thm:inf-norm-rate-conv}, $\mathbb{E}_i(|\widehat{\lambda}(X) - \lambda(X)|)$, is a useful summary of classifier performance.

\begin{example}[LDA vs. QDA] Suppose we observe training data $(X_1', Y_1'),...,(X_m', Y_m') \in \mathbb{R}^{150}$, with $X|Y = 0 \sim
N(\bm{\mu_0}, \bm{\Sigma_0})$ and $X|Y = 1 \sim N(\bm{\mu_1}, \bm{\Sigma_1})$, with $\bm{\Sigma_0} \neq \bm{\Sigma_1}$. Construct linear discriminant analysis (LDA) and quadratic discriminant analysis (QDA) classifiers $\mathcal{A}_L$ and $\mathcal{A}_Q$. Using \eqref{eq:est-lr}, perform CUSUM changepoint detection with each classifier. Figure \ref{fig:lda-qda-plot} shows $\mathbb{E}_0(|\widehat{\lambda}(X) - \lambda(X)|)$ and $\mathbb{E}_0(\widetilde{T}(A))$ for the resulting LDA and QDA detection procedures; for each classifier, the threshold $A$ is chosen so that $\mathbb{E}_\infty(\widetilde{T}(A)) \approx 180$. For large $m$, $\mathcal{A}_Q$ outperforms $\mathcal{A}_L$ because the QDA assumptions are met whereas the LDA assumptions are violated, but for small $m$ we see that $\mathcal{A}_L$ does better due to the bias-variance trade-off. As shown in Figure \ref{fig:lda-qda-plot}, the LDA detection procedure does better than the QDA detection procedure exactly when $\mathcal{A}_L$ outperforms $\mathcal{A}_Q$.
\label{ex:lda-vs-qda}
\end{example}

\begin{figure}
\centering
\includegraphics[scale=0.5]{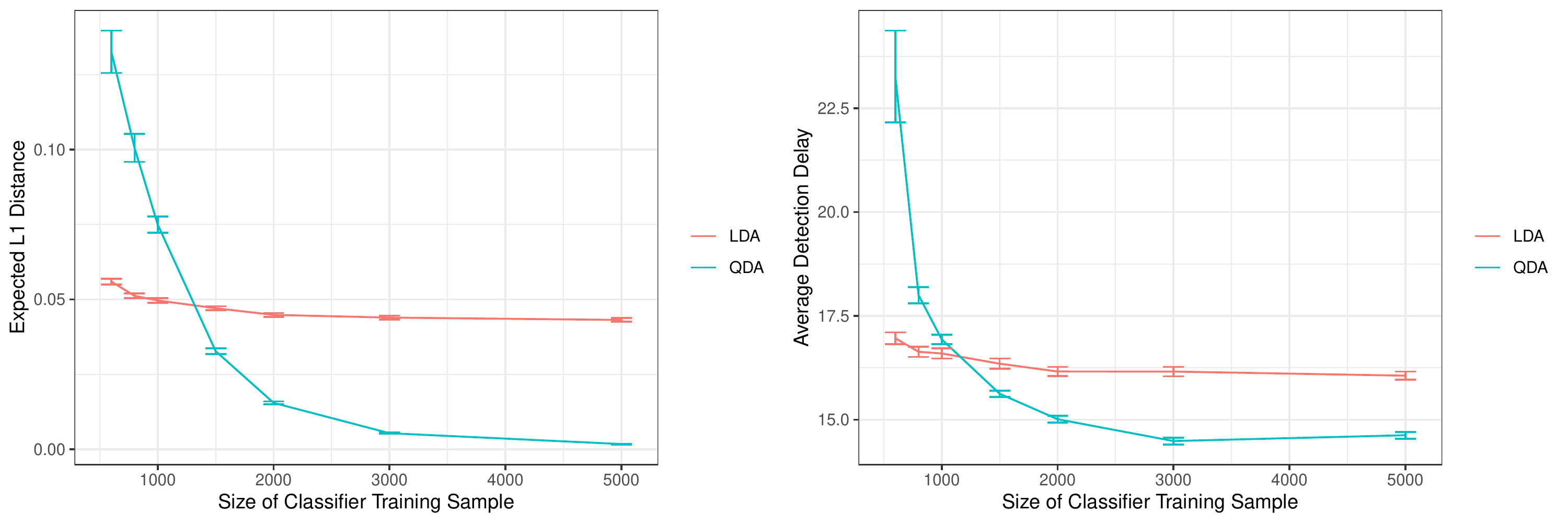}
\caption{Comparison between changepoint detection with LDA and QDA classifier scores, when LDA assumptions are violated but QDA assumptions are satisfied. \underline{Left}: Average L1 distance of the likelihood ratio estimate, as a function of the size of the classifier training sample. \underline{Right}: Each LDA and QDA classifier is used for changepoint detection with the classifier scores. This plot shows the average detection delay when the average ARL is approximately 180 (averaged across classifiers) as a function of the size of the classifier training sample.}
\label{fig:lda-qda-plot}
\end{figure}

In Example \ref{ex:lda-label-shift}, we used an LDA classifier to detect label shift in a mixture of normals with a common covariance matrix. Here we consider a similar procedure which uses nonparametric density estimation rather than an assumption of Gaussianity (see, e.g. \cite{cipolli2017computationally, cipolli2019supervised}).

\begin{example}[Density Estimation Classifier] Suppose that $\pi_\infty$ and $\pi_0$ are known, and we have a training sample $X_1',...,X_m' \overset{iid}{\sim} \mathbb{P}_\infty$ with which we construct density estimates $\widehat{f}_{\infty, X|Y=0}$ and $\widehat{f}_{\infty, X|Y=1}$. Our classifier $\mathcal{A}$ is given by 
\begin{align}
\mathcal{A}(x) = \frac{\pi_\infty \widehat{f}_{\infty, X|Y=1}(x)}{\pi_\infty \widehat{f}_{\infty, X|Y=1}(x) + (1 - \pi_\infty) \widehat{f}_{\infty, X|Y=0}(x)},
\end{align}
and our likelihood ratio estimate is given by \eqref{eq:est-lr}.
For $c > 0$, let $\mathcal{S}_c = \{x: f_{\infty, X}(x) > c\}$. Clearly, $\lim \limits_{c \to 0} \mathbb{P}_i(X \in \mathcal{S}_c) = 1$. Furthermore, with appropriate choice of kernel and bandwidth, $||\widehat{f}_{\infty, X|Y=1} - f_{\infty, X|Y=1}||_\infty, ||\widehat{f}_{\infty, X|Y=0} - f_{\infty, X|Y=0}||_\infty \overset{p}{\to} 0$ (see, e.g., \cite{gine2002rates}). Thus for each $c$, $\sup \limits_{\substack{x \in \mathcal{S}_c \\ \pi_0 \in \Pi_0}} | \widehat{\lambda}_{\pi_0, \mathcal{A}, m}(x) - \lambda_{\pi_0}(x)| \overset{p}{\to} 0$.
\end{example}

\section{Simulation studies}
\label{sec:simulations}

We investigate the empirical performance of the classifier-based label shift detection procedure described in Section \ref{sec:proposed-method}, with the likelihood ratio estimate in \eqref{eq:est-lr}. Our likelihood ratio estimate depends on a classifier, and for simplicity we will use an LDA classifier, since it is easy to control whether the LDA assumptions are satisfied, as in Example \ref{ex:lda-vs-qda}. For comparison, we consider several other detection procedures, which represent different approaches to changepoint detection. These procedures are summarized below and in Table \ref{tab:sim-assumptions}. Because the proposed procedure from Section \ref{sec:proposed-method} is designed specifically for the classifier label shift setting, it leverages more information than the other nonparametric detection procedures. In particular, as summarized in Table \ref{tab:sim-assumptions}, estimating the likelihood ratio with \eqref{eq:est-lr} assumes that the label shift assumption holds, and the classifier $\mathcal{A}(\cdot)$ performs well. Through simulations, we show that detection with \eqref{eq:est-lr} outperforms the other nonparametric procedures when these assumptions are met, and can still perform well when the assumptions are violated. While we use a simple setting for simulations, in Section \ref{sec:application} we apply the same methods to detect a change in dengue prevalence using the data and classifier from \cite{tuan2015sensitivity}, with similar results to our simulations in this section.

We compare the following methods:
\begin{description}
\item[Classifier-based CUSUM] This is the nonparametric method proposed in Section \ref{sec:proposed-method}, with likelihood ratio estimate \eqref{eq:est-lr}. For the purposes of simulations, $\mathcal{A}$ in \eqref{eq:est-lr} is an LDA classifier. Here we use a CUSUM procedure, so $\Psi(r) = \max\{1,r\}$.

\item[Optimal CUSUM] The optimal CUSUM procedure \citep{page1954continuous} uses the true likelihood ratio, and can be implemented when the true likelihood ratio is known. 

\item[uLSIF CUSUM] uLSIF \citep{kanamori2009least} is a nonparametric method for estimating the likelihood ratio, by maximizing an empirical divergence. As described in Section \ref{sec:proposed-method}, uLSIF can be used with training data under the label shift assumption by re-weighting or re-sampling training points, but it does not exploit the label shift structure of the likelihood ratio. A variety of similar density ratio estimation approaches exist, including KLIEP and kernel mean matching \citep{sugiyama2008direct, gretton2009covariate, kanamori2012statistical}, and we take uLSIF as a representative. Here we used the \texttt{densratio} package \citep{densratio} to implement uLSIF, and employ the resulting estimate in a CUSUM procedure.

\item[CPM] \cite{ackerman2020sequential} perform nonparametric label shift detection using the CPM framework described in \cite{ross2011nonparametric} and \cite{cpm}. The CPM framework detects changes in a sequence of univariate data using repeated nonparametric tests; \cite{ackerman2020sequential} applied repeated Cramer--von-Mises tests to a sequence of cosine divergences calculated between new data and training data. We evaluate CPM applied to both the \textbf{classifier} predictions and the cosine \textbf{divergences} used by \cite{ackerman2020sequential}. CPM stopping times are calculated with the \texttt{cpm} package \citep{cpm}.

\item[kNN] \cite{chen2019sequential} and \cite{chu2018sequential} propose a sequential graph-based $k$-nearest neighbors (kNN) detection procedure, based on repeated nearest-neighbor two-sample tests in a sliding window. Note that while the kNN approach uses training data, only a fixed window of data is considered. Similar to some parameters in \cite{chu2018sequential}, we set the window size to 200 and the number of nearest neighbors to $k=5$. Stopping times are calculated with the \texttt{gStream} package \citep{gstream}.
\end{description}

\begin{table}
\caption{\label{tab:sim-assumptions} Comparison of the information used by each changepoint detection procedure considered in simulations. CPM and kNN are more general than the classifier CUSUM procedure from Section \ref{sec:proposed-method}, but as a result they leverage less information. If the label shift assumption holds and the classifier performs well, we expect the classifier-based CUSUM method to outperform these more general procedures.}
\centering
\begin{tabular}{|c|c|l|c|c|c|c|}
\hline
\begin{tabular}[c]{@{}c@{}}Information \\ leveraged\end{tabular} & \begin{tabular}[c]{@{}c@{}}Classifier\\ CUSUM\end{tabular} & \multicolumn{1}{c|}{\begin{tabular}[c]{@{}c@{}}Optimal\\ CUSUM\end{tabular}} & \begin{tabular}[c]{@{}c@{}}uLSIF\\ CUSUM\end{tabular} & \begin{tabular}[c]{@{}c@{}}CPM\\ (classifier)\end{tabular} & \begin{tabular}[c]{@{}c@{}}CPM\\ (divergence)\end{tabular} & kNN          \\ \hline
True labels                                                      &                                                            & \multicolumn{1}{c|}{$\checkmark$}                                            &                                                       & \multicolumn{1}{l|}{}                                      &                                                            &              \\ \hline
Label shift                                                      & $\checkmark$                                               &                                                                              & $\checkmark$                                          & \multicolumn{1}{l|}{}                                      &                                                            &              \\ \hline
Good classifier                                                  & $\checkmark$                                               &                                                                              &                                                       & $\checkmark$                                               &                                                            &              \\ \hline
Training data                                                    & $\checkmark$                                               &                                                                              & $\checkmark$                                          & $\checkmark$                                               & $\checkmark$                                               & $\checkmark$ \\ \hline
\end{tabular}
\end{table}

\subsection{Metrics} 

Performance of each detection procedure is measured by detection delay, calculated as $\mathbb{E}_0[T]$ (for CUSUM procedures, this corresponds to Lorden's \citep{lorden1971procedures} detection delay). As is standard, we compare detection delays with each method calibrated to have the same average run length $\mathbb{E}_\infty[T]$. Here we use $\mathbb{E}_\infty[T] = 500$, which is a common value in the sequential detection literature. Expected stopping times are estimated via Monte Carlo simulation.

\subsection{Scenarios} 

Under the label shift assumption, the classifier-based CUSUM procedure uses classifier predictions $\mathcal{A}(X_i)$ to estimate the likelihood ratio. To compare performance of the different detection procedures, we use two different simulation scenarios. In the first scenario, we change the training sample size and the performance of the classifier (by changing the distribution of the data $X_i$ and violating LDA assumptions). In the second scenario, we change the performance of the classifier and the suitability of the label shift assumption.

\begin{description}
\item[Scenario 1:] Pre-change data is generated as $X \sim \pi_\infty N(\bm{\mu_1}, \bm{\Sigma_1}) + (1 - \pi_\infty) N(\bm{\mu_0}, \bm{\Sigma_0})$, and post-change data is generated as $X \sim \pi_0 N(\bm{\mu_1}, \bm{\Sigma_1}) + (1 - \pi_0) N(\bm{\mu_0}, \bm{\Sigma_0})$. In all simulations, $\pi_\infty = 0.4$, $\pi_0 = 0.7$, $\bm{\mu_0} = [0, 0]$, $\bm{\mu_1} = [1.5, 1.5]$, and $\bm{\Sigma_0} = \bm{I}$. Training data $(X_1', Y_1'),...,(X_m', Y_m')$ is simulated from the pre-change distribution, and used to train the LDA classifier, estimate the uLSIF likelihood ratio, and startup the CPM and kNN detection statistics. We consider $m \in \{200, 1000, 5000\}$, and $\bm{\Sigma_1} \in \left\lbrace \bm{I}, \begin{bmatrix}
2 & 0.1\\
0.1 & 2
\end{bmatrix}, \begin{bmatrix}
4 & 0.5\\
0.5 & 4
\end{bmatrix} \right\rbrace$.

\item[Scenario 2:] Pre-change data is generated as $X \sim \pi_\infty N(\bm{\mu_{\infty, 1}}, \bm{\Sigma_1}) + (1 - \pi_\infty) N(\bm{\mu_{\infty, 0}}, \bm{\Sigma_0})$, and post-change data is generated as $X \sim \pi_0 N(\bm{\mu_{0,1}}, \bm{\Sigma_1}) + (1 - \pi_0) N(\bm{\mu_{0,0}}, \bm{\Sigma_0})$. In all simulations, $\pi_\infty = 0.4$, $\pi_0 = 0.7$, $\bm{\mu_{\infty,0}} = [0, 0]$, $\bm{\mu_{\infty, 1}} = [1.5, 1.5]$, and $\bm{\Sigma_0} = \bm{I}$. Training data $(X_1', Y_1'),...,(X_{1000}', Y_{1000}')$ is simulated from the pre-change distribution, and used to train the LDA classifier, estimate the uLSIF likelihood ratio, and startup the CPM and kNN detection statistics. We consider $\bm{\Sigma_1} \in \left\lbrace \bm{I}, \begin{bmatrix}
2 & 0.1\\
0.1 & 2
\end{bmatrix}, \begin{bmatrix}
4 & 0.5\\
0.5 & 4
\end{bmatrix} \right\rbrace$ and the following pairs for $\bm{\mu_{0,0}}$ and $\bm{\mu_{0,1}}$: $\bm{\mu_{0,0}} = [0.5, 0.5]$ and $\bm{\mu_{0,1}} = [1, 1]$; $\bm{\mu_{0,0}} = [0.75, 0.75]$ and $\bm{\mu_{0,1}} = [0.75, 0.75]$; and $\bm{\mu_{0,0}} = [1, 1]$ and $\bm{\mu_{0,1}} = [0.5, 0.5]$.
\end{description}

\subsection{Results}

Table \ref{tab:sim-results-1} shows the results for Scenario 1, when the label shift assumption holds. We can see that when the LDA assumptions are met (specifically $\bm{\Sigma_1} = \bm{\Sigma_0} = \bm{I}$), LDA performs very close to the optimal CUSUM procedure, as we would predict from Example \ref{ex:lda-label-shift}. Performance of the LDA detection procedure relative to the optimal CUSUM procedure declines as the assumption that $\bm{\Sigma_1} = \bm{\Sigma_0}$ is violated, but is still better than the other nonparametric methods. This suggests that if the label shift assumption holds, the likelihood ratio estimate in \eqref{eq:est-lr} is a good choice for detecting the change. Detection with the uLSIF procedure improves with training sample size $m$, as it becomes easier to estimate the likelihood ratio function and variability in the likelihood ratio estimate decreases. CPM also performs better as the sample size increases, as training data is used to construct the detection statistic. While the kNN method makes no assumptions about the change or the distribution of data, the cost of this flexibility is a decrease in detection performance.

\begin{table}
\caption{\label{tab:sim-results-1} Simulation results for Scenario 1. Performance of each procedure is measured by detection delay, calculated as $\mathbb{E}_0[T]$. The estimated detection delay from Monte Carlo simulation is reported, with the standard error in parentheses. For the kNN procedure, a window of size 200 is used, so only 200 training points are considered. In the case of kNN, if a change is not detected within the sliding window, windows after time point 200 will consist of only post-change observations, so for computational purposes a fixed number of post-change observations is simulated and we report a lower bound on the detection delay.}
\centering
\begin{tabular}{|c|c|c|c|c|c|c|c|}
\hline
\multirow{2}{*}{$\bm{\Sigma_1}$} & \multirow{2}{*}{$m$} & \multicolumn{6}{c|}{Detection delay when $\mathbb{E}_\infty[T] \approx 500$}                                                                                                                                                                            \\ \cline{3-8} 
                            &                      & \begin{tabular}[c]{@{}c@{}}Classifier\\ CUSUM\end{tabular} & \begin{tabular}[c]{@{}c@{}}Optimal\\ CUSUM\end{tabular} & \begin{tabular}[c]{@{}c@{}}uLSIF\\ CUSUM\end{tabular} & \begin{tabular}[c]{@{}c@{}}CPM\\ (classifier)\end{tabular} & \begin{tabular}[c]{@{}c@{}}CPM\\ (divergence)\end{tabular} & kNN                                \\ \hline
\multirow{3}{*}{$\begin{bmatrix}
1 & 0\\
0 & 1
\end{bmatrix}$}          & 200                  & 28.8 (0.59)                                               & \multirow{3}{*}{29.0 (0.23)}                            & 33.8 (5.62)                                           & 46.9 (1.45)      & 51.5 (1.72)      & \multirow{3}{*}{$\geq 155$ (3.46)} \\ \cline{2-3} \cline{5-7}
                            & 1000                 & 28.8 (0.26)                                                &                                                         & 33.4 (1.51)                                           & 35.0 (0.81)      & 37.0 (0.85)      &                                    \\ \cline{2-3} \cline{5-7}
                            & 5000                 & 28.8 (0.10)                                                &                                                         & 31.5 (1.11)                                           & 32.6 (0.76)      & 33.7 (0.81)      &                                    \\ \hline
\multirow{3}{*}{$\begin{bmatrix}
2 & 0.1\\
0.1 & 2
\end{bmatrix}$}          & 200                  & 34.3 (0.96)                                                & \multirow{3}{*}{33.1 (0.28)}                            & 49.0 (11.4)                                           & 61.2 (2.24)      & 69.5 (2.67)      & \multirow{3}{*}{$\geq 168$ (3.58)} \\ \cline{2-3} \cline{5-7}
                            & 1000                 & 34.0 (0.40)                                                &                                                         & 43.3 (2.39)                                           & 42.4 (1.07)      & 46.6 (1.18)      &                                    \\ \cline{2-3} \cline{5-7}
                            & 5000                 & 34.0 (0.16)                                                &                                                         & 38.8 (2.44)                                           & 38.1 (0.94)      & 41.7 (1.03)      &                                    \\ \hline
\multirow{3}{*}{$\begin{bmatrix}
4 & 0.5\\
0.5 & 4
\end{bmatrix}$}          & 200                  & 41.9 (1.44)                                                & \multirow{3}{*}{33.4 (0.29)}                            & 164 (47.3)                                            & 74.6 (2.82)      & 94.8 (3.56)      & \multirow{3}{*}{$\geq 176$ (3.59)} \\ \cline{2-3} \cline{5-7}
                            & 1000                 & 41.8 (0.60)                                                &                                                         & 73.3 (7.05)                                           & 52.1 (1.46)      & 62.4 (1.74)      &                                    \\ \cline{2-3} \cline{5-7}
                            & 5000                 & 41.9 (0.25)                                                &                                                         & 54.2 (5.19)                                           & 51.2 (1.36)      & 58.5 (1.55)      &                                    \\ \hline
\end{tabular}
\end{table}

Table \ref{tab:sim-results-2} shows the results for Scenario 2, when the label shift assumption is violated. When the label shift assumption is approximately true ($\bm{\mu_{0,0}} = [0.5, 0.5]$ and $\bm{\mu_{0,1}} = [1, 1]$), we can see that LDA detection is comparable to uLSIF and CPM. However, the LDA procedure is more sensitive to large departures from the label shift assumption, for which methods with fewer assumptions perform better. Overall, CPM with classifier predictions performs well, as the classifier predictions are a useful summary of the data even when label shift doesn't hold.

\begin{table}
\caption{\label{tab:sim-results-2} Simulation results for Scenario 2. Performance of each procedure is measured by detection delay, calculated as $\mathbb{E}_0[T]$. The estimated detection delay from Monte Carlo simulation is reported, with the standard error in parentheses. For the kNN procedure, a window of size $200$ is used, so only 200 training points are considered. In the case of kNN, if a change is not detected within the sliding window, windows after time point 200 will consist of only post-change observations, so for computational purposes a fixed number of post-change observations is simulated and we report a lower bound on the detection delay.}
\centering
\tiny
\begin{tabular}{|c|c|c|c|c|c|c|c|}
\hline
\multirow{2}{*}{$\bm{\Sigma_1}$} & \multirow{2}{*}{\begin{tabular}[c]{@{}c@{}}Post-change \\ distribution\end{tabular}} & \multicolumn{6}{c|}{Detection delay when $\mathbb{E}_\infty[T] \approx 500$}                                                                                                                                                                                                                                               \\ \cline{3-8} 
                                 &                                                                                      & \begin{tabular}[c]{@{}c@{}}Classifier\\ CUSUM\end{tabular} & \begin{tabular}[c]{@{}c@{}}Optimal\\ CUSUM\end{tabular} & \begin{tabular}[c]{@{}c@{}}uLSIF\\ CUSUM\end{tabular} & \begin{tabular}[c]{@{}c@{}}CPM\\ (classifier)\end{tabular} & \begin{tabular}[c]{@{}c@{}}CPM\\ (divergence)\end{tabular} & kNN               \\ \hline
\multirow{3}{*}{$\begin{bmatrix}
1 & 0\\
0 & 1
\end{bmatrix}$}               & $\!\begin{aligned} \bm{\mu_{0,0}} &= [0.5, 0.5] \\ 
               \bm{\mu_{0,1}} &= [1, 1] \end{aligned}$                                                                                    & 70.2 (1.68)                                                & 24.1 (0.17)                                             & 55.8 (4.49)                                           & 64.4 (1.27)                                                & 66.4 (1.40)                                                & $\geq 130$ (3.22) \\ \cline{2-8} 
                                 & $\!\begin{aligned} \bm{\mu_{0,0}} &= [0.75, 0.75] \\ 
               \bm{\mu_{0,1}} &= [0.75, 0.75] \end{aligned}$                                                                                    & 184 (7.49)                                                 & 22.9 (0.16)                                             & 117 (14.6)                                            & 92.1 (1.64)                                                & 96.0 (1.78)                                                & $\geq 131$ (3.30) \\ \cline{2-8} 
                                 & $\!\begin{aligned} \bm{\mu_{0,0}} &= [1, 1] \\ 
               \bm{\mu_{0,1}} &= [0.5, 0.5] \end{aligned}$                                                                                   & 541 (17.4)                                                 & 28.9 (0.21)                                             & 282 (3.23)                                            & 177 (3.18)                                                 & 193 (3.55)                                                 & $\geq 161$ (3.57) \\ \hline
\multirow{3}{*}{$\begin{bmatrix}
2 & 0.1\\
0.1 & 2
\end{bmatrix}$}               & $\!\begin{aligned} \bm{\mu_{0,0}} &= [0.5, 0.5] \\ 
               \bm{\mu_{0,1}} &= [1, 1] \end{aligned}$                                                                                    & 70.9 (1.66)                                                & 34.3 (0.27)                                             & 67.9 (6.17)                                           & 74.6 (1.78)                                                & 79.4 (1.92)                                                & $\geq 173$ (3.60) \\ \cline{2-8} 
                                 & $\!\begin{aligned} \bm{\mu_{0,0}} &= [0.75, 0.75] \\ 
               \bm{\mu_{0,1}} &= [0.75, 0.75] \end{aligned}$                                                                                    & 123 (3.82)                                                 & 31.4 (0.26)                                             & 101 (11.5)                                            & 109 (2.69)                                                 & 120 (3.03)                                                 & $\geq 177$ (3.59) \\ \cline{2-8} 
                                 & $\!\begin{aligned} \bm{\mu_{0,0}} &= [1, 1] \\ 
               \bm{\mu_{0,1}} &= [0.5, 0.5] \end{aligned}$                                                                                    & 225 (8.56)                                                 & 30.0 (0.24)                                             & 166 (20.4)                                            & 196 (5.42)                                                 & 217 (6.15)                                                 & $\geq 177$ (3.60) \\ \hline
\multirow{3}{*}{$\begin{bmatrix}
4 & 0.5\\
0.5 & 4
\end{bmatrix}$}               & $\!\begin{aligned} \bm{\mu_{0,0}} &= [0.5, 0.5] \\ 
               \bm{\mu_{0,1}} &= [1, 1] \end{aligned}$                                                                                    & 73.8 (1.67)                                                & 29.8 (0.24)                                             & 87.6 (10.1)                                           & 77.0 (1.98)                                                & 88.7 (2.49)                                                & $\geq 170$ (3.57) \\ \cline{2-8} 
                                 & $\!\begin{aligned} \bm{\mu_{0,0}} &= [0.75, 0.75] \\ 
               \bm{\mu_{0,1}} &= [0.75, 0.75] \end{aligned}$                                                                                    & 103 (2.78)                                                 & 22.9 (0.19)                                             & 101 (11.9)                                            & 88.9 (2.55)                                                & 101 (3.08)                                                 & $\geq 152$ (3.50) \\ \cline{2-8} 
                                 & $\!\begin{aligned} \bm{\mu_{0,0}} &= [1, 1] \\ 
               \bm{\mu_{0,1}} &= [0.5, 0.5] \end{aligned}$                                                                                    & 146 (4.58)                                                 & 18.1 (0.15)                                             & 120 (15.3)                                            & 107 (3.12)                                                 & 122 (3.75)                                                 & $\geq 120$ (3.20) \\ \hline
\end{tabular}
\end{table}

\section{Detecting a change in dengue prevalence}
\label{sec:application}

We apply our changepoint detection procedure to the problem of detecting a change in the prevalence of dengue, using data and classifier predictions from the work of \cite{tuan2015sensitivity}. As the prevalence of dengue changes, but the symptoms are expected to stay the same, the label shift assumption is appropriate for this change.

\subsection{Data}

Data comes from \cite{tuan2015sensitivity}, who collected information on 5720 febrile patients aged 15 or younger in three Vietnamese hospitals. Of these patients, 30\% had dengue. The authors recorded their true dengue status (using a gold-standard test), the results of an NS1 rapid antigen test, and a variety of physical measurements for classification with a logistic regression classifier.

\subsection{Classifier} 

We use 1000 patients as training data for the classifier, and save the rest for evaluating our classifier and estimating changepoint detection performance. With the training set, we construct a logistic GAM classifier to predict true dengue status with the following covariates: vomiting (yes/no), skin bleeding (yes/no), BMI, age, temperature, white blood cell count, hematocrit, and platelet count. As in \cite{tuan2015sensitivity}, the ROC curve has an AUC of approximately 0.8.

\subsection{Scenarios} 

To assess change detection, we simulate a change in the prevalence of dengue by resampling the 4720 patients not used for training. As the group of patients in the study aims to represent the population of patients who would be tested for dengue, we take the sample proportion of 30\% as our baseline dengue prevalence among patients who would be tested. The degree of change in this prevalence, when an outbreak occurs, depends on the magnitude of the outbreak and the baseline prevalence in the population. Magnitude of change varies; for example, Hanoi, Vietnam saw roughly a five-fold increase in 2009 and 2015 \citep{cuong2011quantifying, cheng2020heatwaves}, while Kaohsiung City, Taiwan saw a 15-fold increase in 2014 \citep{hsu2017trend}. Baseline prevalence in the full population varies depending on location -- for example, \cite{wiwanitkit2006observation} shows approximately 1 in 1 million for certain areas of Thailand, whereas \cite{hsu2017trend} show roughly 1 in 10000 on average in Taiwan. For Vietnam, \cite{cuong2011quantifying} report roughly 1 in 10000 to 1 in 1000 in Hanoi, with a peak of 384 per 100000 in 2009. For our purposes, we consider two label shift changes in prevalence:
\begin{description}
\item[Abrupt change:] We simulate an abrupt 5-fold increase, and take the baseline prevalence in the population to be roughly 1 in 10000. Applying Bayes rule, this gives a post-change prevalence of about 68\% in our study population, and so we simulate a change from 30\% to 68\% and assess our ability to detect this shift.

\item[Gradual change:] When the change occurs, prevalence increases gradually, rather than abruptly. Here, prevalence in the study population changes smoothly from 30\% to 68\% over the course of 100 observations.
\end{description}

\subsection{Methods for comparison in dengue setting} 

We compare the methods from Section \ref{sec:simulations} to detect the change in dengue prevalence. The classifier CUSUM detection procedure is implemented using \eqref{eq:est-lr} with $\mathcal{A}(X)$ the predicted probabilities from the dengue classifier described above. We also compare CUSUM with binarized predictions, using both a threshold of 0.5 and the threshold, 0.33, which maximizes sensitivity + specificity. The optimal CUSUM procedure uses the true dengue status, which is observable if gold-standard tests are available, and we also include CUSUM with binary predictions from the NS1 rapid antigen tests, which again may not be available. The rapid test has a specificity of approximately 99\% and a sensitivity of 70\% \citep{tuan2015sensitivity}, compared with a specificity and sensitivity of 82\% and 70\% for the binarized classifier at threshold 0.33. As in Section \ref{sec:simulations}, we also compare CPM using the classifier predicted probabilities, and CPM with divergences. uLSIF was considered but failed to consistently estimate the likelihood ratio, while kNN was not considered because it performed worse than the other methods in Section \ref{sec:simulations}. Finally, as the post-change parameter is typically unknown, we include the mixing procedure described in \eqref{eq:rtw}. We use $\Pi_0 = [0.6, 0.8]$, which corresponds to a 3.5-fold to 9-fold increase in prevalence.

For the abrupt change scenario, all methods are compared. For the gradual change, we compare the mixture CUSUM procedure to CPM with classifier predictions, as these two methods perform well at detecting an abrupt change and do not require knowledge of the post-change parameter, and we include optimal CUSUM for reference.

\begin{table}
\caption{\label{tab:dengue-results} Comparison of method performance for detecting an abrupt change in dengue prevalence. Performance of each procedure is assessed by average detection delay ($\mathbb{E}_0[T]$), calculated at three different values of average run length ($\mathbb{E}_\infty[T]$). The estimated detection delay from Monte Carlo simulation is reported, with the standard error in parentheses.}
\centering
\begin{tabular}{|c|c|c|c|}
\hline
\multirow{2}{*}{Method}                                                              & \multicolumn{3}{c|}{Detection delay for three values of $\mathbb{E}_\infty[T]$} \\ \cline{2-4} 
                                                                                     & $\mathbb{E}_\infty[T] = 500$           & $\mathbb{E}_\infty[T] = 700$          & $\mathbb{E}_\infty[T] = 1000$         \\ \hline
Optimal CUSUM                                                                        & 11.73 (0.06)  & 12.56 (0.06) & 13.62 (0.07) \\ \hline
Rapid test CUSUM                                                                     & 19.46 (0.15)  & 23.21 (0.20) & 24.66 (0.20) \\ \hline
Mixture CUSUM                                                                        & 25.56 (0.68)  & 27.52 (0.70) & 30.04 (0.75) \\ \hline
\begin{tabular}[c]{@{}c@{}}Classifier CUSUM\\ (predicted probability)\end{tabular}   & 26.28 (0.16)  & 29.06 (0.17) & 31.67 (0.18) \\ \hline
\begin{tabular}[c]{@{}c@{}}Classifier CUSM\\ (binary, threshold = 0.33)\end{tabular} & 30.54 (0.22)  & 33.58 (0.23) & 37.54 (0.26) \\ \hline
CPM (classifier)                                                                     & 36.2 (0.28)   & 40.4 (0.30)  & 44.6 (0.32)  \\ \hline
\begin{tabular}[c]{@{}c@{}}Classifier CUSUM\\ (binary, threshold = 0.5)\end{tabular} & 41.22 (0.37)  & 49.72 (0.45) & 56.04 (0.51) \\ \hline
CPM (divergence)                                                                     & 63.0 (0.54)   & 72.3 (0.60)  & 81.7 (0.67)  \\ \hline
uLSIF CUSUM                                                                          & 1295 (88)     & 1745 (111)   & 2305 (141)   \\ \hline
\end{tabular}
\end{table}

\subsection{Results for dengue example}

\subsubsection{Abrupt change} 

Figure \ref{fig:dengue-detect} and Table \ref{tab:dengue-results} show the relationship between $\mathbb{E}_\infty[T]$ (average time to false alarm) and $\mathbb{E}_0[T]$ (average detection delay) for each method (uLSIF is not shown in Figure \ref{fig:dengue-detect} because the detection delays are too large). As expected from Corollary \ref{cor:binary-convergence}, the true dengue status and the rapid antigen test give the best detection performance. The predicted probabilities outperform the binarized predictions, as binarization throws away information on the likelihood ratio. The two binarized predictions are close, but the optimal threshold -- which maximizes sensitivity + specificity -- performs better, as predicted by Corollary \ref{cor:binary-convergence}. Mixture CUSUM and CUSUM with the predicted probabilities perform equally well, likely because all $\pi_0 \in \Pi_0 = [0.6, 0.8]$ provide similar results. While CPM performs worse than CUSUM with predicted probabilities, it still provides a competitive alternative that requires no assumptions on the post-change prevalence. uLSIF has difficulty estimating the likelihood ratio, and performs substantially worse than the other methods.

\begin{figure}
\centering
\includegraphics[scale=0.55]{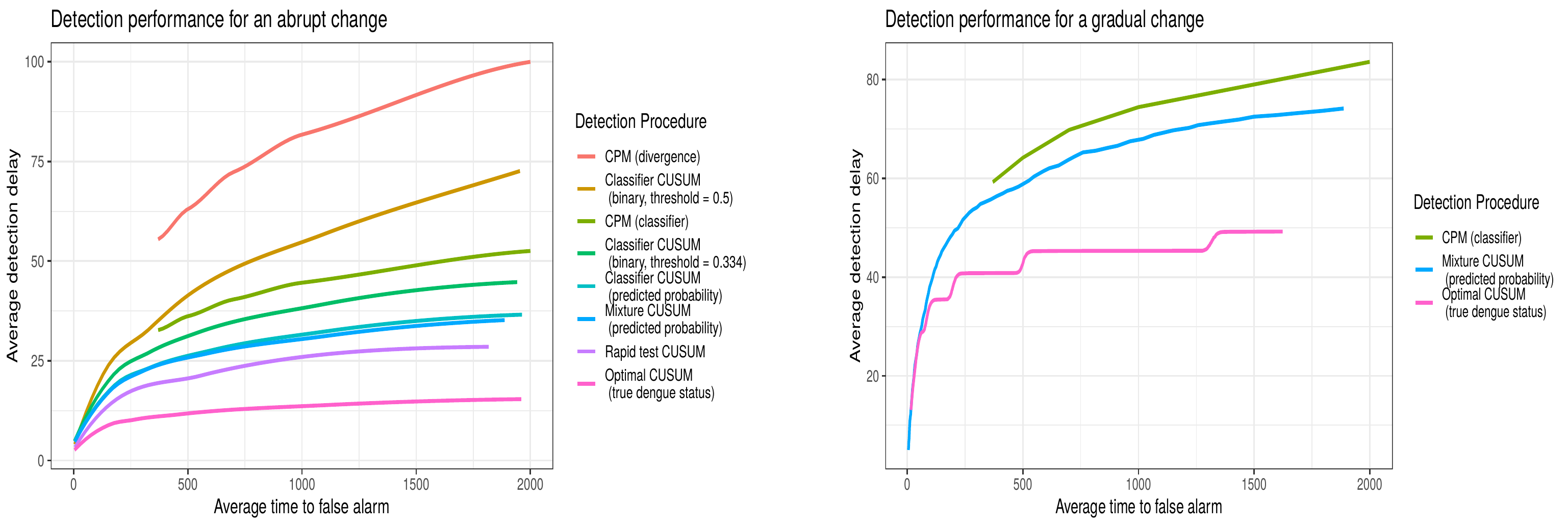}
\caption{\underline{Left}: Comparison of detection performance for CUSUM procedures using different detection procedures, for a change in dengue prevalence from $\pi_\infty = 0.3$ to $\pi_0 = 0.68$. For ease, the method labels for the plot are displayed in descending order of detection delay. \underline{Right}: Comparison of detection performance when $\pi_0$ changes gradually from $0.3$ to $0.68$.}
\label{fig:dengue-detect}
\end{figure}

\subsubsection{Gradual change} 

Figure \ref{fig:dengue-detect} shows the relationship between  $\mathbb{E}_\infty[T]$ and $\mathbb{E}_0[T]$ for each method. Detection delays are longer for all methods under gradual change than abrupt change, because the magnitude of change is initially smaller. However, each method can raise an alarm reasonably quickly. This is valuable because real changes in prevalence are expected to be continuous, rather than an abrupt switch from one prevalence to another. While the classic CUSUM procedure, and the nonparametric methods discussed in this paper, are designed to detect an abrupt change, Figure \ref{fig:dengue-detect} demonstrates that these methods are sensitive to other changes too.

\section{Discussion}

When a classifier is applied sequentially over time, it is important to detect any change in the distribution of classification data. First, distributional shifts can affect the validity of classifier predictions, and second, a change in distribution may suggest a problem like a disease outbreak. In this paper, we consider procedures for detecting label shift, which can occur when the prevalence of a disease changes over time, but the symptoms of the disease remain the same.

As we focus on detecting changes in classification data, it is natural to use the classifier predictions in our detection procedure. Here we propose a simple, nonparametric sequential changepoint detection method that uses the classifier predictions to approximate the true likelihood ratio \eqref{eq:est-lr}. Our procedure requires no additional estimation or training, assuming only that a reasonable value of the post-change prevalence $\pi_0$ can be specified. Furthermore, when this post-change parameter is unknown, we combine our nonparametric procedure with Lai's mixture CUSUM approach \citep{lai1998information}, and mix over the unknown prevalence. 

Performance of the detection procedure then depends directly on classifier performance. To demonstrate this, in Section \ref{sec:theory} we introduce new convergence results for nonparametric sequential detection procedures with likelihood ratio estimates. Through simulations in Section \ref{sec:simulations}, we illustrate that our proposed detection procedure outperforms other nonparametric methods when the label shift assumption holds, and still achieves comparable performance when the the label shift assumption is violated. The same holds true when these methods are applied to real dengue classification data in Section \ref{sec:application}, in which we apply the classifier described in \cite{tuan2015sensitivity} to detect a simulated dengue outbreak. First, we see that improved classifier performance results in improved detection performance -- if the gold standard dengue test is unavailable, only the NS1 rapid antigen test (which has better specificity than the classifier from \cite{tuan2015sensitivity}) outperforms our proposed procedure. Second, other nonparametric procedures respond more slowly to the outbreak, because they leverage less information about a change in prevalence.

While the label shift assumption might not hold exactly in real data, \cite{rabanser2019failing} found that testing for label shift is still a useful way for finding more general changes in distribution. This is supported by our simulation results, in which our label shift detection procedure still performs well under mild violations of the label shift assumption.

\section*{Acknowledgments}

Both authors were partially supported by NSF DMS1613202. The authors gratefully acknowledge Samuel Ackerman, Chris Genovese, Aaditya Ramdas, Alessandro Rinaldo, and Zack Lipton for helpful discussions and feedback.

\bibliographystyle{apalike}
\bibliography{references}

\newpage

\appendix

\section{Supplementary Materials}

\subsection{Data and code}

Full code for the data analysis and simulations presented in this paper is available at \url{https://github.com/ciaran-evans/label-shift-detection}. The data used in the dengue case study was made publicly available by \cite{tuan2015sensitivity}, and a copy is provided in the repository with the code.

\subsection{Proof of Theorem \ref{thm:inf-norm-rate-conv}}
\label{sec:theorem1-proof}

\begin{proof}
Following \citet{tartakovsky2009design}, we can express the desired expectations as solutions to Fredholm integral equations of the second kind. In particular, let $v_1(x) = \mathbb{E}_i(T^x(A))$, and $v_2(x) = \mathbb{E}_i(\widetilde{T}^x(A) | \widehat{\lambda}_m)$. Then, 
\begin{align}
v_1(x) &= 1 + \int \limits_0^A v_1(y) k(x, y) dy \\
v_2(x) &= 1 + \int \limits_0^{A} v_2(y) \widetilde{k}(x,y) dy,
\end{align}
where $k(x,y) = \frac{\partial}{\partial y} \mathbb{P}_i\left(\lambda(X) \leq \frac{y}{\Psi(x)}\right)$ \citep{tartakovsky2009design}, and $\widetilde{k}(x,y) = \frac{\partial}{\partial y} \mathbb{P}_i\left(\widehat{\lambda}_m(X) \leq \frac{y}{\Psi(x)} | \widehat{\lambda}_m \right)$. Using this representation, we show that $||v_1 - v_2||_\infty$ is small.

First, let $K$ and $\widetilde{K}$ be the integral operators defined by the kernels $k(x,y)$ and $\widetilde{k}(x,y)$. Then under assumption (A3), we have $K, \widetilde{K} : C([0,A]) \to C([0,A])$ are compact \citep{han2009theoretical}. Furthermore, $v_1 = (I - K)\bm{1}$ and $v_2 = (I - \widetilde{K})\bm{1}$, where $\bm{1}(x) \equiv 1$.

Next we show that $Kv = v$ and $\widetilde{K}v = v$ have only the trivial solution $v = 0$. Considering $K, \widetilde{K}$ on $L^2(0,A)$, we have $K, \widetilde{K} : L^2(0,A) \to L^2(0, A)$ because $k, \widetilde{k}$ are Hilbert-Schmidt kernel functions under (A3). Then consider the adjoint operators $K^*$ and $\widetilde{K}^*$. \cite{moustakides2011numerical} demonstrate that the maximal eigenvalue of $K^*$ is strictly less than 1 when $\lambda(X)$ is continuous, and therefore the same holds for $K$. So, $Kv = v$ has only the trivial solution $v = 0$, with the same result for $\widetilde{K}$.

Then by the Fredholm alternative theorem, $(I - K)^{-1}, (I - \widetilde{K})^{-1} : C([0, A]) \to C([0, A])$ are bijective and bounded \citep{han2009theoretical}. Furthermore, 
\begin{align}
||K - \widetilde{K}||_\infty \leq \sup \limits_{x \in [0,A]} \int \limits_0^A |k(x,y) - \widetilde{k}(x,y)| dy \leq \int \limits_0^\infty |f^i_\lambda(s) - f^i_{\widehat{\lambda}_m}(s)| ds = 2 TV(f^i_\lambda, f^i_{\widehat{\lambda}_m}).
\end{align}

By (A5), $TV(f^i_\lambda, f^i_{\widehat{\lambda}_m}) \overset{p}{\to} 0$, and so by Theorem 2.3.5 in \cite{han2009theoretical}, we have that
\begin{align}
|\mathbb{E}_i(\widetilde{T}^x(A) | \widehat{\lambda}_m) - \mathbb{E}_i(T^x(A))| \leq ||v_2 - v_1||_\infty \leq ||\widetilde{K}^{-1}||_\infty ||(K - \widetilde{K})v_1||_\infty \leq O_P(TV(f^i_\lambda, f^i_{\widehat{\lambda}_m})).
\end{align}

Finally, we can improve the bound in the case when $f^i_\lambda$ is bounded. Note that
\begin{align}
\begin{split}
\left\lvert \int \limits_0^A (k(x,y) - \widetilde{k}(x,y)) v_1(y) dy \right\rvert &\leq \left \lvert \int \limits_0^\infty f_{\Psi(x)\lambda}(y) \overline{v}_1(y)dy - \int \limits_0^\infty f_{\Psi(x)\widehat{\lambda}_m}(y) \overline{v}_1(y) dy \right\rvert \\ 
& \hspace{0.5cm} + v_1(A) \left\lvert \int \limits_A^\infty (f_{\Psi(x)\lambda}(y) - f_{\Psi(x)\widehat{\lambda}_m}(y)) dy \right\rvert,
\end{split}
\end{align}
where $\overline{v}_1(y) = v_1(y)$ for $y \leq A$, and $\overline{v}_1(y) = v_1(A)$ for $y \geq A$. If $f^i_\lambda$ is bounded, then $v_1$ is Lipschitz, and 
\begin{align}
\left \lvert \int \limits_0^\infty f_{\Psi(x)\lambda}(y) \overline{v}_1(y)dy - \int \limits_0^\infty f_{\Psi(x)\widehat{\lambda}_m}(y) \overline{v}_1(y) dy \right\rvert &\leq C \sup \limits_{||h||_L \leq 1} \left \lvert \int \limits_0^\infty f_{\Psi(x)\lambda}(y) h(y)dy - \int \limits_0^\infty f_{\Psi(x)\widehat{\lambda}_m}(y) h(y) dy \right\rvert \\
& \leq C \Psi(A) \mathbb{E}_i[|\widehat{\lambda}_m(X) - \lambda(X)|],
\end{align}
by Kantorovich-Rubinstein duality. Since $\left\lvert \int \limits_A^\infty (f_{\Psi(x)\lambda}(y) - f_{\Psi(x)\widehat{\lambda}_m}(y)) dy \right\rvert \leq ||F^i_\lambda - F^i_{\widehat{\lambda}_m}||_\infty$, this concludes the proof.
\end{proof}

\subsection{Proof of Corollary \ref{cor:binary-convergence}}
\label{sec:cor-1-proof}

\begin{proof}
The proof of Corollary \ref{cor:binary-convergence} is similar to the proof of Theorem \ref{thm:inf-norm-rate-conv}, but in a finite-dimensional space. Since $\pi_0$ and $\pi_\infty$ are known, and $\log(\pi_0/\pi_\infty)$ and $\log((1 - \pi_0)/(1 - \pi_\infty))$ are both rational, then $R_t^x$ and $\widetilde{R}_t^x$ are Markov chains on the same finite state space, where $x$ and the states are linear combinations of $\log(\pi_0/\pi_\infty)$ and $\log((1 - \pi_0)/(1 - \pi_\infty))$. Let $v_1$ denote the vector of expected stopping times when starting the optimal detection procedure in each state, and $v_2$ the corresponding vector for the estimated detection procedure. Then, $(I - K)v_1 = \bm{1}$ and $(I - \widetilde{K})v_2 = \bm{1}$, where $\bm{1}$ is the vector of all 1's, and $K$ and $\widetilde{K}$ are transition probability matrices for the Markov chain.

In particular, all elements of $K$ are either 0, $\mathbb{P}(Y_i = 0)$, or $\mathbb{P}(Y_i = 1)$. The corresponding elements of $\widetilde{K}$ are 
\begin{align}
\widetilde{K}_{ij} = \begin{cases}
0 & K_{ij} = 0 \\
\mathbb{P}(\mathcal{A}(X) = 0 | \mathcal{A}) & K_{ij} = \mathbb{P}(Y_i = 0) \\
\mathbb{P}(\mathcal{A}(X) = 1 | \mathcal{A}) & K_{ij} = \mathbb{P}(Y_i = 1).
\end{cases}
\end{align}
Therefore, $|\widetilde{K}_{ij} - K_{ij}| \leq O_P(\mathbb{P}_i(\mathcal{A}(X) = 0 | Y = 1, \mathcal{A}) + \mathbb{P}_i(\mathcal{A}(X) = 1 | Y = 0, \mathcal{A}))$, and as in Theorem \ref{thm:inf-norm-rate-conv} the proof follows again by applying Theorem 2.3.5 from \cite{han2009theoretical}.
\end{proof}

\subsection{Proof of Theorem \ref{thm:dens-est-mix}}
\label{sec:thm-2-proof}

\begin{proof}
For ease of notation, we drop the subscript $i \in \{0, \infty\}$ and the dependence on the classifier $\mathcal{A}$ from the expectations; our goal is to show $|\mathbb{E}(\widetilde{T}_w(A)) - \mathbb{E}(T_w(A))| \overset{p}{\to} 0$. Let $U_t(A) = \min\{T_w(A), t\}$ and $\widetilde{U}_t(A) = \min\{\widetilde{T}_w(A), t\}$. Then for any $t_0 > 0$,
\begin{align}
\begin{split}
|\mathbb{E}(\widetilde{T}_w(A)) - \mathbb{E}(T_w(A))| &\leq |\mathbb{E}(\widetilde{T}_w(A)) - \mathbb{E}(\widetilde{U}_{t_0}(A))| + |\mathbb{E}(\widetilde{U}_{t_0}(A)) - \mathbb{E}(U_{t_0}(A))| + \\
& \hspace{0.5cm} |\mathbb{E}(U_{t_0}(A)) - \mathbb{E}(T_w(A))|.
\end{split}
\end{align}

For $\varepsilon > 0$, let $\mathcal{C}_{t_0, \varepsilon} = \left\lbrace X_1,...,X_{t_0} : \sup \limits_{t \leq t_0} |\widetilde{R}_{t,w} - R_{t,w}| < \varepsilon | \mathcal{A}\right \rbrace$. Then 
\begin{align}
|\mathbb{E}(\widetilde{U}_{t_0}(A)) - \mathbb{E}(U_{t_0}(A))| \leq |\mathbb{E}(\widetilde{U}_{t_0}(A)) - \mathbb{E}(\widetilde{U}_{t_0}(A) | \mathcal{C}_{t_0, \varepsilon})| + |\mathbb{E}(\widetilde{U}_{t_0}(A) | \mathcal{C}_{t_0, \varepsilon}) - \mathbb{E}(U_{t_0}(A))|,
\end{align}
and $|\mathbb{E}(\widetilde{U}_{t_0}(A)) - \mathbb{E}(\widetilde{U}_{t_0}(A) | \mathcal{C}_{t_0, \varepsilon})| \leq 2 t_0 (1 - \mathbb{P}(\mathcal{C}_{t_0, \varepsilon}))$. Also, if $X_1,...,X_{t_0} \in \mathcal{C}_{t_0, \varepsilon}$ then $U_{t_0}(A - \varepsilon) \leq \widetilde{U}_{t_0}(A) \leq U_{t_0}(A + \varepsilon)$, and so
\begin{align}
\begin{split}
|\mathbb{E}(\widetilde{U}_{t_0}(A) | \mathcal{C}_{t_0, \varepsilon}) - \mathbb{E}(U_{t_0}(A))| &\leq |\mathbb{E}(U_{t_0}(A + \varepsilon)) - \mathbb{E}(T_w(A + \varepsilon))| + |\mathbb{E}(T_w(A + \varepsilon)) - \mathbb{E}(T_w(A - \varepsilon))| + \\
& \hspace{0.5cm} |\mathbb{E}(T_w(A - \varepsilon)) - \mathbb{E}(U_{t_0}(A - \varepsilon))|.
\end{split}
\end{align}

Now let $\eta > 0$. By continuity of $\mathbb{E}(T_w(A))$, there exists $\varepsilon > 0$ such that $|\mathbb{E}(T_w(A + \varepsilon)) - \mathbb{E}(T_w(A - \varepsilon))| < \eta/6$. Next, let $c_1 > 0$ and $0 < \delta_1 < \min\{\frac{a}{\pi_\infty}, \frac{1-b}{1-\pi_\infty}\}$. For $\pi_0 \in \Pi_0$, define $\lambda_{\pi_0}^{\min}$ by
\begin{align}
\lambda_{\pi_0}^{\min}(x) = \begin{cases}
\lambda_{\pi_0}(x) - \delta_1 & x \in \mathcal{S}_{c_1} \\
\min\{\frac{a}{\pi_\infty}, \frac{1-b}{1-\pi_\infty}\} & x \not \in \mathcal{S}_{c_1},
\end{cases}
\end{align}
and define $T_w^{\min}$ by replacing $\lambda_{\pi_0}$ with $\lambda_{\pi_0}^{\min}$ in \eqref{eq:rtw}, and where we choose $\delta_1$ and $c_1$ sufficiently small that $\mathbb{E}(T_w^{\min})$ is finite. Therefore, if $\sup \limits_{\substack{x \in \mathcal{S}_{c_1} \\ \pi_0 \in \Pi_0}} | \widehat{\lambda}_{\pi_0, \mathcal{A}, m}(x) - \lambda_{\pi_0}(x)| < \delta_1$, then $|\mathbb{E}(\widetilde{T}_w(A)) - \mathbb{E}(\widetilde{U}_{t_0}(A))| \leq |\mathbb{E}(T_w^{\min}(A)) - \mathbb{E}(U_{t_0}^{\min}(A))|$.

Now choose $t_0$ sufficiently large that
\begin{align}
\begin{split}
|\mathbb{E}(T_w^{\min}(A)) - \mathbb{E}(U_{t_0}^{\min}(A))| & < \eta/6 \\
|\mathbb{E}(U_{t_0}(A + \varepsilon)) - \mathbb{E}(T_w(A + \varepsilon))| &< \eta/6 \\
|\mathbb{E}(U_{t_0}(A - \varepsilon)) - \mathbb{E}(T_w(A - \varepsilon))| &< \eta/6 \\
|\mathbb{E}(U_{t_0}(A)) - \mathbb{E}(T_w(A))| &< \eta/6.
\end{split}
\end{align}

Finally, we just have to control $2t_0(1 - \mathbb{P}(\mathcal{C}_{t_0, \varepsilon}))$. If $\sup \limits_{\substack{x \in \mathcal{S}_{c} \\ \pi_0 \in \Pi_0}} | \widehat{\lambda}_{\pi_0, \mathcal{A}, m}(x) - \lambda_{\pi_0}(x)| < \delta$ and $X_1,...,X_{t_0} \in \mathcal{S}_c$, then 
\begin{align}
\begin{split}
\sup \limits_{t \leq t_0}|\widetilde{R}_{t,w} - R_{t,w}| &= \sup \limits_{t \leq t_0} \left\lvert \max \limits_{t - m_\alpha \leq k \leq t} \int \limits_{\Pi_0} \prod \limits_{i=k}^t \widehat{\lambda}_{\pi_0, \mathcal{A}, m}(X_i) w(\pi_0) d\pi_0 - \max \limits_{t - m_\alpha \leq k \leq t} \int \limits_{\Pi_0} \prod \limits_{i=k}^t \lambda_{\pi_0}(X_i) w(\pi_0) d\pi_0 \right\rvert \\
\\
& \leq \sup \limits_{t \leq t_0} \max \limits_{t - m_\alpha \leq k \leq t} \left\lvert \int \limits_{\Pi_0} \prod \limits_{i=k}^t \widehat{\lambda}_{\pi_0, \mathcal{A}, m}(X_i) w(\pi_0) d\pi_0 - \int \limits_{\Pi_0} \prod \limits_{i=k}^t \lambda_{\pi_0}(X_i) w(\pi_0) d\pi_0 \right\rvert \\
& \leq \sup \limits_{t \leq t_0} \int \limits_{\Pi_0} \left( \max \limits_{t - m_\alpha \leq k \leq t} \left\lvert \prod \limits_{i=k}^t \widehat{\lambda}_{\pi_0, \mathcal{A}, m}(X_i) - \prod \limits_{i=k}^t \lambda_{\pi_0}(X_i) \right\rvert \right) w(\pi_0) d\pi_0 \\
& \leq \max \left\lbrace (M + \delta)^{m_\alpha} - M^{m_\alpha}, \ M^{m_\alpha} - (M - \delta)^{m_\alpha}  \right\rbrace.
\end{split}
\end{align}
where $M = \max \left\lbrace \frac{1}{\pi_\infty} \ , \ \frac{1}{1 - \pi_\infty} \right\rbrace$. Choose $\delta_2 < \delta_1$ such that the right hand side is at most $\varepsilon$, and $c_2 < c_1$ such that $2t_0(1 - \mathbb{P}(X_i \in \mathcal{S}_{c_2})^{t_0}) < \eta/6$. Then, $\sup \limits_{\substack{x \in \mathcal{S}_{c_2} \\ \pi_0 \in \Pi_0}} | \widehat{\lambda}_{\pi_0, \mathcal{A}, m}(x) - \lambda_{\pi_0}(x)| < \delta_2$ implies that $|\mathbb{E}(\widetilde{T}_w(A)) - \mathbb{E}(T_w(A))| < \eta$, and so
\begin{align}
\mathbb{P}\left(|\mathbb{E}(\widetilde{T}_w(A)) - \mathbb{E}(T_w(A))| < \eta \right) \geq \mathbb{P}\left(\sup \limits_{\substack{x \in \mathcal{S}_{c_2} \\ \pi_0 \in \Pi_0}} | \widehat{\lambda}_{\pi_0, \mathcal{A}, m}(x) - \lambda_{\pi_0}(x)| < \delta_2 \right) \to 1.
\end{align}
As this works for all $\eta > 0$, then we conclude that $|\mathbb{E}(\widetilde{T}_w(A)) - \mathbb{E}(T_w(A))| \overset{p}{\to} 0$ as desired.
\end{proof}

\subsection{Details for Example \ref{ex:normal-shift}}
\label{sec:ex-1-details}

Under $\mathbb{P}_\infty$, $X \sim N(0, 1)$ and under $\mathbb{P}_0$, $X \sim N(\mu, 1)$. Therefore,
\begin{align}
\lambda(x) = \frac{\exp\{ -0.5 (x - \mu)^2 \}}{\exp\{-0.5 x^2\}} = \exp\left\lbrace \mu x - \frac{\mu^2}{2} \right\rbrace.
\end{align}
Let $\widehat{\mu}$ be an estimate of $\mu$, then $\widehat{\lambda}(x) = \exp\left\lbrace \widehat{\mu} x - \frac{\widehat{\mu}^2}{2} \right\rbrace$. 

Next, we need $f_\lambda^i$ and $f_{\widehat{\lambda}}^i$. Since $P(\widehat{\lambda}(X) \leq s) = P(X \leq \log(s)/\mu + \mu/2)$, then $f_\lambda^i(s) = f_X^i( \log(s)/\mu + \mu/2)/(\mu s)$. Likewise, $f_{\widehat{\lambda}}^i(s) = f_X^i( \log(s)/\widehat{\mu} + \widehat{\mu}/2)/(\widehat{\mu} s)$. Since the pre- and post-change distributions are normal, then
\begin{align}
f_\lambda^\infty(s) \propto \frac{1}{\mu s} \exp \left\lbrace -\frac{1}{2} \left( \frac{\log s}{\mu} + \frac{\mu}{2} \right)^2 \right\rbrace \hspace{0.5cm} f_{\widehat{\lambda}}^\infty(s) \propto \frac{1}{\widehat{\mu} s} \exp \left\lbrace -\frac{1}{2} \left( \frac{\log s}{\widehat{\mu}} + \frac{\widehat{\mu}}{2} \right)^2 \right\rbrace \\
f_\lambda^0(s) \propto \frac{1}{\mu s} \exp \left\lbrace -\frac{1}{2} \left( \frac{\log s}{\mu} - \frac{\mu}{2} \right)^2 \right\rbrace \hspace{0.5cm} f_{\widehat{\lambda}}^\infty(s) \propto \frac{1}{\widehat{\mu} s} \exp \left\lbrace -\frac{1}{2} \left( \frac{\log s}{\widehat{\mu}} + \frac{\widehat{\mu}}{2} - \mu \right)^2 \right\rbrace.
\end{align}

Clearly, (A2) is satisfied. To show that (A3) is satisfied, we prove that $f_\lambda^i$ and $f_{\widehat{\lambda}}^i$ are Lipschitz. We have
\begin{align}
\frac{d}{ds} f_{\lambda}^\infty(s) &= \frac{1}{(\mu s)^2} \exp \left\lbrace -\frac{1}{2} \left(\frac{\log s}{\mu} + \frac{\mu}{2} \right)^2 \right\rbrace \left( -\frac{\log s}{\mu} - \frac{3 \mu}{2} \right) \\
\frac{d}{ds} f_{\lambda}^0(s) &= \frac{1}{(\mu s)^2} \exp \left\lbrace -\frac{1}{2} \left(\frac{\log s}{\mu} - \frac{\mu}{2} \right)^2 \right\rbrace \left( -\frac{\log s}{\mu} - \frac{\mu}{2} \right).
\end{align}
Since both derivatives are bounded, $f_\lambda^\infty$ and $f_\lambda^0$ are Lipschitz. Similarly, $f_{\widehat{\lambda}}^\infty$ and $f_{\widehat{\lambda}}^0$ are Lipschitz.

Finally, to show that (A5) is bounded, and provide the upper bound in Theorem \ref{thm:inf-norm-rate-conv}, we need to show that $TV(f_{\widehat{\lambda}}^i, f_\lambda^i) \overset{p}{\to} 0$. We have
\begin{align}
|f_{\widehat{\lambda}}^\infty(s) - f_\lambda^\infty(s)| &\leq \left| \frac{1}{\widehat{\mu}} - \frac{1}{\mu} \right| \left| \frac{1}{s} \exp \left\lbrace -\frac{1}{2} \left( \frac{\log s}{\widehat{\mu}} + \frac{\widehat{\mu}}{2} \right) \right\rbrace \right| \\ 
& + \frac{1}{\mu} \left|  \frac{1}{s} \exp \left\lbrace -\frac{1}{2} \left( \frac{\log s}{\widehat{\mu}} + \frac{\widehat{\mu}}{2} \right)^2 \right\rbrace - \frac{1}{s}  \exp \left\lbrace -\frac{1}{2} \left( \frac{\log s}{\mu} + \frac{\mu}{2} \right)^2 \right\rbrace \right|.
\end{align}
First, 
\begin{align}
\int \limits_0^\infty \left| \frac{1}{\widehat{\mu}} - \frac{1}{\mu} \right| \left| \frac{1}{s} \exp \left\lbrace -\frac{1}{2} \left( \frac{\log s}{\widehat{\mu}} + \frac{\widehat{\mu}}{2} \right) \right\rbrace \right| ds = O_P\left( \left| \frac{1}{\widehat{\mu}} - \frac{1}{\mu} \right| \right) = O_P(|\widehat{\mu} - \mu|).
\end{align}
Next,
\begin{align}
\begin{split}
\left|  \frac{1}{s} \exp \left\lbrace -\frac{1}{2} \left( \frac{\log s}{\widehat{\mu}} + \frac{\widehat{\mu}}{2} \right)^2 \right\rbrace - \frac{1}{s}  \exp \left\lbrace - \frac{1}{2} \left( \frac{\log s}{\mu} + \frac{\mu}{2} \right)^2 \right\rbrace \right| &= |\widehat{\mu} - \mu| \left| \frac{1}{\mu^3 s} e^{-0.125(\mu^2 + 2 \log s)^2 / \mu^2} (\log^2(s) - 0.25\mu^4) \right| \\
& + O_P(|\widehat{\mu} - \mu|^2),
\end{split}
\end{align}
and 
\begin{align}
\int \limits_0^\infty |\widehat{\mu} - \mu| \left| \frac{1}{\mu^3 s} e^{-0.125(\mu^2 + 2 \log s)^2 / \mu^2} (\log^2(s) - 0.25\mu^4) \right| ds = O_P(|\widehat{\mu} - \mu|).
\end{align}

Therefore,
\begin{align}
TV(f_{\widehat{\lambda}}^i, f_\lambda^i) = \int \limits_0^\infty |f_{\widehat{\lambda}}^\infty(s) - f_\lambda^\infty(s)| ds \leq O_P(|\widehat{\mu} - \mu|).
\end{align}

\subsection{Details for Example \ref{ex:lda-label-shift}}
\label{sec:ex-2-details}

Under both pre- and post-change distributions, $X|Y = y \sim N(\bm{\mu_y}, \bm{\Sigma})$. Let $p(x) = \mathbb{P}_\infty(Y = 1 | X = x)$, and let $\mathcal{A}$ be the LDA classifier with predicted probabilities $\mathcal{A}(x) = \widehat{\mathbb{P}}_\infty(Y = 1 | X = x)$, given by \eqref{eq:lda-score}. Then, $\lambda(X)$ and $\widehat{\lambda}(X)$ are linear transformations of $p(X)$ and $\mathcal{A}(X)$ respectively, by \eqref{eq:scoreratio} and \eqref{eq:est-lr}. Therefore, to check the assumptions it is sufficient to consider $f_{p|Y=y}$ and $f_{\mathcal{A}|Y=y}$, the conditional densities of $p(X)|Y=y$ and $\mathcal{A}(X)|Y=y$.

First, we have 
\begin{align}
\begin{split}
f_{p|Y=y}(s) &= \frac{1}{s - s^2} \phi \left( \frac{ \log \left( \frac{s(1 - \pi_\infty)}{(1-s)\pi_\infty}\right) + \frac{1}{2} \left( \bm{\mu_1}^T \bm{\Sigma}^{-1} \bm{\mu_1} - \bm{\mu_0}^T \bm{\Sigma}^{-1} \bm{\mu_0} \right) - \bm{\mu_y}^T \bm{\Sigma}^{-1} (\bm{\mu_1} - \bm{\mu_0}) }{ \sqrt{ (\bm{\mu_1} - \bm{\mu_0})^T \bm{\Sigma}^{-1} (\bm{\mu_1} - \bm{\mu_0}) }  } \right) \\
f_{\mathcal{A}|Y=y}(s) &= \frac{1}{s - s^2} \phi \left( \frac{ \log \left( \frac{s(1 - \pi_\infty)}{(1-s)\pi_\infty}\right) + \frac{1}{2} \left( \widehat{\bm{\mu_1}}^T \widehat{\bm{\Sigma}}^{-1} \widehat{\bm{\mu_1}} - \widehat{\bm{\mu_0}}^T \widehat{\bm{\Sigma}}^{-1} \widehat{\bm{\mu_0}} \right) - \bm{\mu_y}^T \widehat{\bm{\Sigma}}^{-1} (\widehat{\bm{\mu_1}} - \widehat{\bm{\mu_0}}) }{ \sqrt{ (\widehat{\bm{\mu_1}} - \widehat{\bm{\mu_0}})^T \widehat{\bm{\Sigma}}^{-1} \bm{\Sigma} \widehat{\bm{\Sigma}}^{-1} (\widehat{\bm{\mu_1}} - \widehat{\bm{\mu_0}}) }  } \right),
\end{split}
\end{align}
where $\phi$ is the standard normal density. For simplicity, we write this as
\begin{align}
\begin{split}
f_{p|Y=y}(s) &= \frac{1}{s - s^2} \phi \left( \frac{\log \left( \frac{s(1 - \pi_\infty)}{(1-s)\pi_\infty}\right) + a}{b} \right) \\
f_{\mathcal{A}|Y=y}(s) &= \frac{1}{s - s^2} \phi \left( \frac{\log \left( \frac{s(1 - \pi_\infty)}{(1-s)\pi_\infty}\right) + \widehat{a}}{\widehat{b}} \right).
\end{split} 
\end{align}
Now, we show that these densities are Lipschitz. For $p(X)|Y=y$, we have
\begin{align}
\frac{d}{ds} f_{p|Y=y} = \frac{ \left(-a + b^2(2s - 1) - \log \left( \frac{s(1 - \pi_\infty)}{(1-s)\pi_\infty}\right) \right)  }{b^2(1 - s^2) s^2},
\end{align}
which is bounded. Similarly, $\frac{d}{ds} f_{\mathcal{A}|Y=y}$ is bounded. Therefore, $f_{p|Y=y}$ and $f_{\mathcal{A}|Y=y}$ are Lipschitz, so $f_\lambda^i$ and $f_{\widehat{\lambda}}^i$ are all Lipschitz, which satisfies (A3).

Next, we want to show convergence in total variation distance. Note that $|f_\lambda^i - f_{\widehat{\lambda}}^i| = O_P(|f_{p|Y=1} - f_{\mathcal{A}|Y=1}|) + O_P(|f_{p|Y=0} - f_{\mathcal{A}|Y=0}|)$, so it suffices to show that $TV(f_{p|Y=y}, f_{\mathcal{A}|Y=y})$ converges. Using $a, \widehat{a}, b, \widehat{b}$ from above, we get
\begin{align}
TV(f_{p|Y=y}, f_{\mathcal{A}|Y=y}) = \int |f_{\mathcal{A}|Y=y}(s) - f_{p|Y=y}(s)| ds = O_P(|\widehat{a} - a|) + O_P(|\widehat{b} - b|).
\end{align}
The right hand side converges by strong consistency of $\widehat{\bm{\mu_y}}$, $y \in \{0,1\}$, and $\widehat{\bm{\Sigma}}^{-1}$, satisfying (A5).

\end{document}